\newtheorem{theorem}{Theorem}[section]
\newtheorem{lemma}[theorem]{Lemma}
\theoremstyle{definition}
\theoremstyle{remark}
\numberwithin{equation}{section}
\newcommand{\commentout}[1]{}
\numberwithin{equation}{section}
\newcommand{\underhat}[1]{\underset{\widehat{}}{#1}}
\begin{document}

\title{Connection preserving deformations and $q$-semi-classical orthogonal polynomials}
%\shorttitle{Connection preserving \& $q$-orthogonal systems}
\author{Christopher M. Ormerod, N. S. Witte and Peter J. Forrester}
%\address{\affilnum{1}Department of Mathematics and Statistics. La Trobe University Bundoora VIC 3086 Australia
\address{Department of Mathematics and Statistics. The University of Melbourne Parkville VIC 3010 Australia}%
%\correspdetails{nsw@ms.unimelb.edu.au}

%\received{7th May 2010}

\begin{abstract}
We present a framework for the study of $q$-difference equations satisfied by $q$-semi-classical orthogonal systems. 
As an example, we identify the $q$-difference equation satisfied by a deformed version of the little $q$-Jacobi polynomials as a gauge transformation of a special case of the associated linear problem for $q$-$\mathrm{P}_{\mathrm{VI}}$. We obtain a parameterization of the associated linear problem in terms of orthogonal polynomial variables and find the relation between this parameterization and that of Jimbo and Sakai.
\end{abstract}

\maketitle

\section{Introduction}

Monodromy representations have been a central element in the study of integrable systems \cite{ClassicalIntegrability}. 
A pioneering step was the parameterization of the condition that a linear second order differential equation with four regular singularities $\{0,t,1,\infty\}$ has monodromy independent of $t$, in terms of the sixth Painlev\'e equation
$\mathrm{P}_{\mathrm{VI}}$ \cite{Fuchs1907, Fuchs1914}. This theory was elaborated upon by Garnier \cite{Ga_1917}
and Schlesinger \cite{Schlesinger}, and culminated in the 1980's with the studies of the Kyoto School \cite{JimboMiwa1, JimboMiwa2, JimboMiwa3}. A contemporary perspective of the theory can be found in the monographs \cite{IKSY_1991} and \cite{FIKN_2006}. 
For a matrix linear differential equation of the form
\begin{equation}\label{eq0:linear}
\frac{\mathrm{d}}{\mathrm{d}x}Y(x) = A(x) Y(x),
\end{equation}
where
\[
A(x) = \sum_i \frac{A_i}{x-\alpha_i},
\]
one expects the general solution to be multivalued with branch points located at $\alpha = \{\alpha_i\}$. By evaluating a solution on any element of the homotopy classes of closed loops, $[\gamma]$, in some manner around a selection of the poles, one obtains the equation 
\[
Y(\gamma(1)) = Y(\gamma(0)) M_{[\gamma]}. 
\]
This relates solutions on different sheets of a Riemann surface. The set $\{ M_{[\gamma]} : \gamma:[0,1]\to \mathbb{C}\}$ is a representation of the fundamental group of the compliment of the poles, $\Gamma = \pi_1\left( \mathbb{CP}_1 \setminus \{ \alpha \} \right)$. The aim is to deform the linear system with respect to a chosen deformation parameter, $t$, so that the representation of  $\Gamma$ does not depend on $t$. In the theory of monodromy preserving deformations, a natural choice of parameters are the poles of $A$. This leads to the classical Schlesinger's equations \cite{Schlesinger}
\begin{align*}
\frac{\partial A_i}{\partial \alpha_j} =& \frac{[A_i, A_j]}{\alpha_i - \alpha_j},\qquad i \neq j,\\
\frac{\partial A_i}{\partial \alpha_i} =& -\sum_{j \neq i} \frac{[A_i, A_j]}{\alpha_i - \alpha_j}.
\end{align*}
Given a $2\times 2$ linear system with four poles, $\{0,t,1,\infty\}$, and zero of $A_{12}(x)$ at $y$, imposing the isomonodromic property in the variable $t$ requires that $y$ satisfies $\mathrm{P}_{\mathrm{VI}}$ \cite{Fuchs1907}.

The relationship of the theory of monodromy preserving deformations to orthogonal polynomials arises because under certain
conditions on the orthogonality measure the polynomials and their associated functions form an isomonodromic system,
albeit one with a particular restriction. 
Under fairly general conditions the derivative of each polynomial in the system is expressible in terms of linear combinations of other members of the orthogonal polynomial system, an observation first made by Laguerre \cite{Laguerre}. 
The conditions for when this is the case have been given in some generality by Bonan and Clark\cite{BC_1990}, and by
Bauldry \cite{Ba_1990}. 
In particular, a semi-classical weight will satisfy these conditions and the three term recurrence tells us that we may express the derivative of the polynomial in the system of orthogonal polynomials as a rational linear combination of the polynomial itself and the previous polynomial in the system. The rationality of this linear problem means that such orthogonal polynomial systems satisfy a linear problem of the form \eqref{eq0:linear}. 

The notion of a semi-classical weight or linear functional was introduced by Maroni \cite{Ma_1985} as an attempt
to characterize the classical orthogonal polynomials in a coherent framework and guide the quest of looking for
systems beyond this class.
By appropriately extending the work of Laguerre, Magnus \cite{Magnus1995} was able to show that a semi-classical, deformed orthogonal polynomial system \eqref{eq0:linear} parameterized a special case of the monodromy preserving deformation considered by Fuchs \cite{Fuchs1907}. This allows one to express special solutions of $\mathrm{P}_{\mathrm{VI}}$ in terms of coefficients of orthogonal polynomial systems. Conversely, this also allows key quantities relating to
 orthogonal polynomials to be expressed in terms of solutions of $\mathrm{P}_{\mathrm{VI}}$. In addition to the various determinantal solutions of integrable systems provided by the theory of orthogonal polynomials, the application of integrable systems to orthogonal polynomials have resulted in advances in the calculation of various statistics of interest in random matrix theory (see e.g.~\cite{Forrester2007}).
  
For $q$-difference equations, an analogue of the theory of monodromy preserving deformations is the theory of connection preserving
deformations \cite{Sakai:qP6}. The linear problem of interest is given by the $n\times n$ matrix equation 
\begin{equation}\label{eq0:qlinear}
Y(qx) = A(x) Y(x),
\end{equation}
where
\[
A(x) = A_0 + A_1x + \ldots + A_mx^m.
\]
Instead of the basic information being contained in the relation between the value of solutions on different sheets of the Riemann surface, the variables of interest are associated with the relation between two fundamental solutions, $Y_0$ and $Y_\infty$, which are holomorphic functions at $0$ and $\infty$ respectively. Much of the theory concerning the existence of these solutions has remained relatively unchanged since the pioneering days of Birkhoff and his followers \cite{Adams:General, Birkhoff:General, Carmichael:General}. If these solutions exist, then one may meromorphically continue these solutions on $\mathbb{C}$, and furthermore form the connection matrix, $P(x)$, specified by
\[
Y_0(x) = Y_\infty(x) P(x),
\]
which is quasi-periodic in $x$ \cite{Birkhoff:General}. From the Galois theory of $q$-difference equations, which primarily considers a classification of problems of the form \eqref{eq0:qlinear}, we know that the entries of $P(x)$ are expressible in terms of elliptic theta functions \cite{Sauloy:Galois}. 

In the same manner as monodromy preserving deformations, one may consider a deformation of \eqref{eq0:qlinear} that preserves the connection matrix. 
An appropriate choice of deformation parameter turns out to be the roots of the determinant of $A$ and the eigenvalues of $A_0$ and $A_m$. By considering a $2\times 2$ linear system with $m = 2$ and choosing the deformation parameter to be proportional to two of the roots of the determinant and the two eigenvalues of $A_0$, Jimbo and Sakai \cite{Sakai:qP6} showed that the connection preserving deformation was equivalent to a second order $q$-difference equation admitting the sixth Painlev\'e equation as a continuum limit. 

We have remarked that semi-classical orthogonal polynomial systems give rise to monodromy preserving
deformations relating to Painlev\'e equations. A natural problem then is to investigate the relationship between
$q$-semi-classical orthogonal polynomial systems, connection preserving deformations, and the $q$-Painlev\'e equations. 
A number of different approaches to constructing isomonodromic analogues for the difference, $q$-difference and elliptic
equations of the Sakai Scheme \cite{Sakai} have been proposed recently \cite{AB06,AB07,Rains,Yamada}, 
which differ in varying degrees from what we offer here. One other work which is close to the spirit of the present
work is that of Biane \cite{Biane:qP6}. However there is a history of studies into $q$-semi-classical orthogonal polynomial 
systems which was not motivated by the above considerations. Shortly after the introduction of the semi-classical
concepts Magnus extended this to the $q$-difference systems and in fact to the most general type of
divided difference operators on non-uniform lattices in a pioneering study \cite{Magnus88}. 
In addition Maroni and his co-workers have made extensions to difference and $q$-difference systems in a 
series of works \cite{KM_2002,MM_2002,Kh_2003,Me_2009,GK_2009}. These later authors have successfully reproduced 
parts of the classical Askey Tableaux (which was achieved most fully by Magnus at the level of the Askey-Wilson
polynomials) however the application of their theoretical tools beyond the classical cases have invariably been 
made to specialised or degenerate cases and failed to make contact with the discrete and $q$-Painlev\'e
equations. A slightly different methodology has been the approach of Ismail and collaborators
\cite{IW_2001,Is_2003,Chen:qladder,IS_2009}, who have derived difference and $q$-difference equations for orthogonal
polynomials with respect to weights more general than the semi-classical class, much in the spirit of the
Bonan and Clark and Bauldry studies, and so the matrix $ A(x) $ is no longer rational. This approach has not
been applied to systems beyond the classical Askey Tableaux, and consequently not made contact with the 
discrete Painlev\'e systems. The most recent work of Biane \cite{Biane:qP6}, and of Van Assche and co-workers \cite{VanAssche,BSvA_2008}
has addressed some of the shortcomings discussed above, however while these authors have uncovered the 
spectral structures of the theory they have yet to elucidate the deformation structures required. It is
our intention to complete this task by laying out the deformation structures.   

Our contributions in this paper are to first formulate an extension of the classical work of
Laguerre for finding differential equations satisfied by orthogonal polynomials,
when the differential operator is the $q$-difference
\begin{equation}\label{eq0:Dqdef}
D_{q,x_i} f(x_1, \ldots, x_n) = \frac{ f(x_1,\ldots, x_n) - f(x_1, \ldots, qx_i, \ldots x_n)}{x_i(1-q)}.
\end{equation}
This is done in Section 4, after first having introduced preliminary material from orthogonal polynomial
theory, and connection preserving deformations in Sections 2 and 3 respectively.
In Section 5 we apply this extension to the specific case of a deformation of the
little $q$-Jacobi polynomials \cite{Andrews:orthogonal}. We give a parameterization of the associated 
linear problem in terms of variables relating to the orthogonal polynomial system. However, this contains 
redundant variables, and in fact a set of three natural coordinates can be identified which suffice to parameterize the linear problem.
When written in terms of the natural coordinates, the linear problem implies the $q$-$\mathrm{P}_{\mathrm{VI}}$
equations
\begin{subequations}\label{eq0:qP6}
\begin{gather}
y_1\hat{y}_1 = \frac{a_7a_8 (y_2 - a_1t)(y_2 - a_2t)}{(y_2 - a_3)(y_2-a_4)},\\
y_2\hat{y}_2 = \frac{a_3a_4 (\hat{y}_1 - a_5t)(\hat{y}_1 - a_6t)}{(\hat{y}_1 - a_7)(\hat{y}_1-a_8)},
\end{gather}
\end{subequations}
where $y_i = y_i(t)$ and $\hat{y}_i = y_i(qt)$.
We show that this has the consequence of implying the $\tau$--functions have determinantal solutions in
terms of Hankel determinants of the moments of the little $q$-Jacobi weight. Also we show that the three
term recurrence, written in terms of the natural coordinates, manifests itself as a
B\"acklund transformation which relate to a translational component of the extended affine Weyl group of type $D_5^{(1)}$. 
Throughout we shall assume that $q$ is a fixed complex number such that $0 < |q| < 1$. 

\section{Orthogonal polynomials}

Our starting point is a sequence of moments, $\{\mu_k \}_{k=0}^{\infty}$. From this we define a linear functional, $L$, on the space of polynomials, where $L(x^k) = \mu_k$. An orthogonal polynomial system is a sequence of polynomials, $\{ p_n \}_{n= 0}^{\infty}$, such that $p_m$ is a polynomial of exact degree $m$ and 
\begin{align}\label{eq1:LinearForm}
L(p_ip_j) = \delta_{ij}.
\end{align}
In other words, these polynomials are orthonormal with respect to the given linear functional. This condition defines the coefficients of $p_n$ for all $n$ so long as the Hankel determinants consisting of the moments $\mu_0, \ldots, \mu_{2n}$, given in \eqref{eq2:Delta},
do not vanish \cite{Chihara, Szego}. 

In the case of the classical continuous orthogonal polynomials, this linear functional, $L$, is typically some integral of the multiplication of the argument with some weight function over some support. Linear functionals associated with discrete orthogonal polynomials are specified by a weighted sum, such as Jackson's $q$-integral \cite{Hahn, Jackson:integral, Thomae1, Thomae2}. Any orthogonal polynomial system where \eqref{eq1:LinearForm} holds satisfies the classical three term recurrence relation, given by
\begin{equation}\label{eq1:3termrecurrence}
a_{n+1}p_{n+1} = (x-b_n)p_n - a_n p_{n-1}.
\end{equation}
We parameterize the coefficients of these polynomials by 
\[
p_n(x) = \gamma_n x^n + \gamma_{n,1} x^{n-1} + \gamma_{n,2} x^{n-2} + \ldots + \gamma_{n,n}.
\]
It is possible to determine all the coefficients, and hence the $a_n$ and $b_n$, in terms of the $\mu_k$'s
 \cite{Chihara, Szego}. We set 
\begin{equation}\label{eq2:Delta}
\Delta_n = \det 
\begin{pmatrix} 
\mu_0 & \mu_1 & \ldots & \mu_{n-1} \\
\mu_1 & \mu_2 & \ldots & \mu_n \\
\vdots & & \ddots & \vdots \\
\mu_{n-2} & \mu_{n-1} & & \vdots \\ 
\mu_{n-1} & \mu_n & \ldots & \mu_{2n-2}
\end{pmatrix},
\end{equation}
for $n \geq 1$, with $\Delta_0 = 1$, and
\begin{equation}\label{eq2:Sigma}
\Sigma_n = \det\begin{pmatrix} \mu_0 & \mu_1 & \ldots & \mu_{n-2} & \mu_{n} \\
\mu_1 & \mu_2  & \ldots & \mu_{n-1} & \mu_{n+1} \\
\vdots & \ddots & \ddots & \vdots & \vdots \\
\mu_{n-1} & \mu_n & \ldots & \mu_{2n-3} & \mu_{2n-1}
\end{pmatrix} ,
\end{equation}
for $n\geq 1$, where $\Sigma_0 = 0$. Then we have
\begin{subequations}\label{eq1:anandbndets}
\begin{align}
\label{eq2:detgamma}\gamma_n^2 =& \frac{\Delta_n}{\Delta_{n+1}},\\
\label{eq2:deta}a_n^2 =& \frac{\Delta_{n-1}\Delta_{n+1}}{\Delta_n^2},\\
\label{eq2:detb}b_n =& \frac{\Sigma_{n+1}}{\Delta_{n+1}} - \frac{\Sigma_n}{\Delta_n},
\end{align}
\end{subequations}
as given in \cite{Chihara, Szego}.

Given a sequence of valid moments, one may define the Stieltjes function
\[
f = \sum_{n = 0}^{\infty} \mu_n x^{-n-1}.
\]
We define the associated polynomials and associated functions by the formula
\[
f p_n = \phi_{n-1} + \epsilon_n,
\]
where $\phi_{n-1}$ is a polynomial and $\epsilon_n$ is the remainder. The orthogonality condition implies that
$\epsilon_n \sim \gamma_n^{-1} x^{-n-1} + O(x^{-n-2})$. In fact, by using \eqref{eq1:3termrecurrence}, it is possible to find the
large $x$ expansions for these polynomials in terms of the $a_n$ and $b_n$, giving  
\begin{subequations}\label{eq1:largexexpansion}
\begin{align}
p_n =& \gamma_n\left(x^n - x^{n-1}\sum_{i =0}^{n-1} b_i + x^{n-2}\left(\sum_{i =0}^{n-2} \sum_{j=i+1}^{n-1} b_i b_j - \sum_{i=1}^{n-1}a_i^2\right) + O(x^{n-3}) \right),\\  
\epsilon_n =& \gamma_n^{-1}\left(x^{-n-1} + x^{-n-2}\sum_{i =0}^n b_i + x^{-n-3}\left(\sum_{i =0}^{n} \sum_{j=0}^{i} b_i
b_j + \sum_{i=1}^{n+1}a_i^2\right) +  O(x^{-n-4}) \right),
\end{align}
\end{subequations}
where in the second equation use has been made of the large $x$ expansion of 
$\gamma_n\epsilon_n = (x-b_n)^{-1} \gamma_{n-1} \epsilon_{n-1} + (x-b_n)^{-1} a_{n+1}^2 \gamma_{n+1} \epsilon_{n+1}$ \cite{Ismail:Book, Chihara, Szego, Magnus1995}. Utilising this we can write some explicit relations between the coefficients, $\gamma_n$ and $\gamma_{n,k}$'s, and the $a_n$ and $b_n$,
\begin{subequations}
\begin{align}
a_n =& \frac{\gamma_{n-1}}{\gamma_n},\\
b_{n-1} =&  \frac{\gamma_{n-1,1}}{\gamma_{n-1}}- \frac{\gamma_{n,1}}{\gamma_n},
\end{align}
\end{subequations}
which hold for $n \geq 1$. By equating $(fp_n)p_{n-1}$ with $(fp_{n-1})p_n$ we have the relation
\begin{equation}\label{eq3:anpnen}
\phi_{n-1}p_{n-1} - \phi_{n-2}p_n = \epsilon_{n-1}p_n - \epsilon_np_{n-1} = \frac{1}{a_n},
\end{equation}
which is polynomial by the left hand side and where the final equality follows from \eqref{eq1:largexexpansion}.

It is clear that the sequence of functions, $\{ \epsilon_n \}_{n=0}^{\infty}$, is a solution of \eqref{eq1:3termrecurrence} that is independent of $\{p_n\}$. We find it convenient to introduce the matrix
\begin{equation}\label{eq2:system}
Y_n = \begin{pmatrix}
p_n & \epsilon_n/w \\
p_{n-1} & \epsilon_{n-1}/w
\end{pmatrix}.
\end{equation}
The three term recursion relation is equivalent to the relation
\begin{equation}\label{eq1:linearn}
Y_{n+1} = M_n Y_n,
\end{equation}
where
\[
M_n = \begin{pmatrix}
{\displaystyle \frac{(x-b_n)}{a_{n+1}}} & - {\displaystyle \frac{a_n}{a_{n+1}}} \\
1 & 0
\end{pmatrix}.
\]

\section{Connection preserving deformations}

In this section we revise the established classical theory of systems of linear $q$-difference equations
\cite{Adams:General, Birkhoff:General, Carmichael:General}. The general theory concerns the $m \times m$ matrix system
\begin{equation}\label{eq2:linear}
Y(qx) = A(x)Y(x),
\end{equation}
where $A(x)$ is rational in $x$. We call $A$ the coefficient matrix of the linear $q$-difference equation. One may easily verify that such an equation possesses two symbolic solutions, namely, the infinite products
\begin{subequations}\label{eq4:symbolicsols}
\begin{align}
 &A(x/q) A(x/q^2) A(x/q^3) \ldots, \\
 &A(x)^{-1} A(xq)^{-1} A(xq^2)^{-1} \ldots, 
\end{align}
\end{subequations}
which do not converge in general. We may suitably transform the problem so that $A(x)$ is polynomial, which we parameterize by writing 
\[
A(x) = A_0 + A_1x + \ldots +A_nx^n.
\]
The matrices $A_0$ and $A_n$ are assumed to be semisimple with eigenvalues $\rho_1, \ldots , \rho_m$ and $\lambda_1, \ldots, \lambda_m$ respectively. Regarding the solutions of \eqref{eq2:linear}, we present the following theorem due to Carmichael \cite{Carmichael:General}. 

\begin{theorem}
Suppose the eigenvalues of $A_0$ and $A_n$ satisfy the condition
\[
\frac{\rho_i}{\rho_j} , \frac{\lambda_i}{\lambda_j} \notin \{q , q^2 , q^3, \ldots\},
\]
then there exists two solutions, called the fundamental solutions, of the form 
\begin{align*}
Y_0 =& \widehat{Y}_0 x^{D_0}, \\
Y_\infty =& \widehat{Y}_{\infty} q^{\frac{nu(u-1)}{2}} x^{D_\infty},
\end{align*}
where $\widehat{Y}_0$ and $\widehat{Y}_{\infty}$ are holomorphic functions in neighbourhoods of $x=0$ and $x=\infty$ respectively and $D_0$ and $D_\infty$ are $\mathrm{diag}(\log_q \rho_i)$ and $\mathrm{diag}(\log_q \lambda_i)$ respectively and $u = \log_q x$. 
\end{theorem}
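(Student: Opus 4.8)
The plan is to build each fundamental solution as the product of an explicit singular factor that carries the branching/essential behaviour with a holomorphic matrix series, thereby reducing the problem to a recurrence for the Taylor (resp.\ Laurent) coefficients together with a convergence estimate. Since $A_0$ and $A_n$ are semisimple, I would first apply a constant gauge transformation $Y \mapsto C Y$ to arrange $A_0 = \diag(\rho_i)$ when constructing the solution at $0$, and, in a separate normalization, $A_n = \diag(\lambda_i)$ when constructing the solution at $\infty$. This costs nothing in the statement, as it only conjugates the constant values $\widehat{Y}_0(0)$ and $\widehat{Y}_\infty(\infty)$.

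For $Y_0 = \widehat{Y}_0 x^{D_0}$ the pivotal computation is $(qx)^{D_0} = q^{D_0} x^{D_0}$ with $q^{D_0} = \diag(q^{\log_q \rho_i}) = \diag(\rho_i) = A_0$. Substituting into \eqref{eq2:linear} and cancelling $x^{D_0}$ on the right reduces the equation to $\widehat{Y}_0(qx) A_0 = A(x)\widehat{Y}_0(x)$. Writing $\widehat{Y}_0(x) = \sum_{k \ge 0} Y_k x^k$ with $Y_0 = I$ and matching the coefficient of $x^N$ gives the recurrence $q^N Y_N A_0 - A_0 Y_N = \sum_{j=1}^{\min(n,N)} A_j Y_{N-j}$. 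The left-hand side is the linear map $\mathcal{L}_N(Y) = q^N Y A_0 - A_0 Y$, whose eigenvalue on the $(i,j)$ entry is $q^N \rho_j - \rho_i$; these are all nonzero for every $N \ge 1$ precisely because $\rho_i/\rho_j \notin \{q, q^2, q^3,\ldots\}$. Hence $\mathcal{L}_N$ is invertible and the $Y_N$ are determined uniquely and recursively.

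For $Y_\infty$ the extra ingredient is the scalar factor $g(x) = q^{nu(u-1)/2}$ with $u = \log_q x$. Since $u(u-1)/2 = \binom{u}{2}$ and $\binom{u+1}{2} - \binom{u}{2} = u$, one gets $g(qx)/g(x) = q^{nu} = x^n$, so $g$ exactly absorbs the leading behaviour $A_n x^n$. Writing $A(x) = x^n\bigl(A_n + A_{n-1}x^{-1} + \cdots + A_0 x^{-n}\bigr)$, substituting $Y_\infty = \widehat{Y}_\infty\, g\, x^{D_\infty}$ and using $q^{D_\infty} = A_n$ reduces \eqref{eq2:linear} to $\widehat{Y}_\infty(qx) A_n = \bigl(A_n + A_{n-1}x^{-1} + \cdots + A_0 x^{-n}\bigr)\widehat{Y}_\infty(x)$. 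Expanding $\widehat{Y}_\infty(x) = \sum_{k \ge 0} Z_k x^{-k}$ with $Z_0 = I$ and matching the coefficient of $x^{-N}$ produces $q^{-N} Z_N A_n - A_n Z_N = \sum_{l=1}^{\min(n,N)} A_{n-l} Z_{N-l}$; the relevant operator is now $Z \mapsto q^{-N} Z A_n - A_n Z$, with entrywise eigenvalue $q^{-N}\lambda_j - \lambda_i$, invertible for all $N \ge 1$ exactly when $\lambda_i/\lambda_j \notin \{q, q^2, q^3,\ldots\}$.

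The main obstacle is the convergence of these two formal series, which is what upgrades the formal solutions to genuine holomorphic functions near $0$ and $\infty$. I would control the coefficients through bounds on $\|\mathcal{L}_N^{-1}\|$: near $0$ one has $q^N\rho_j - \rho_i \to -\rho_i$ as $N \to \infty$, so $\|\mathcal{L}_N^{-1}\|$ stays bounded by some constant $C$ and the recurrence yields at worst geometric growth $\|Y_N\| = O(K^N)$ with $K = C n \max_j\|A_j\|$, giving a positive radius of convergence; near $\infty$ one has $|q^{-N}\lambda_j - \lambda_i| \to \infty$, so $\|\mathcal{L}_N^{-1}\| \to 0$ geometrically, the $Z_N$ are in fact bounded, and $\widehat{Y}_\infty$ is holomorphic in a neighbourhood of $\infty$. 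The genuine technical care lies in making these majorant estimates uniform across the finitely many coupled terms on the right-hand side, along the classical lines of Birkhoff and Carmichael.
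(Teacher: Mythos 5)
The paper does not actually prove this theorem---it quotes it directly from Carmichael \cite{Carmichael:General}---and your argument is precisely the classical Birkhoff--Carmichael construction: diagonalize $A_0$ (resp.\ $A_n$) by a constant gauge, peel off the explicit factors $x^{D_0}$ and $q^{\frac{nu(u-1)}{2}}x^{D_\infty}$ (your check $g(qx)/g(x)=q^{nu}=x^n$ is the right computation), reduce to the Sylvester-type recurrences $q^N Y_N A_0 - A_0 Y_N = \sum_j A_j Y_{N-j}$ and $q^{-N} Z_N A_n - A_n Z_N = \sum_l A_{n-l} Z_{N-l}$, whose entrywise eigenvalues $q^N\rho_j-\rho_i$ and $q^{-N}\lambda_j-\lambda_i$ are nonvanishing exactly under the stated non-resonance condition, and then obtain convergence from the uniform bound on $\mathcal{L}_N^{-1}$ near $0$ and its geometric decay near $\infty$. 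Your proof is correct; the only point worth making explicit is the implicit invertibility of $A_0$ and $A_n$ (already forced by the appearance of $\log_q\rho_i$ and $\log_q\lambda_i$ in the statement), which is what guarantees the limits $q^N\rho_j-\rho_i\to-\rho_i\neq 0$ and $\left|q^{-N}\lambda_j-\lambda_i\right|\to\infty$ that your convergence estimates rely on.
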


We may use \eqref{eq2:linear} to continue both solutions meromorphically over $\mathbb{C}\setminus 0$. From these solutions, we define the connection matrix to be
\begin{equation}\label{eq2:connectiondef}
P(x) = Y_\infty(x)^{-1} Y_{0}(x).
\end{equation}
The evolution of $ P9x) $ in $x$ is given by
\begin{align}\label{8.1}
P(qx) =& Y_\infty(qx)^{-1} Y_{0}(qx)\nonumber,\\
 =& Y_\infty(x)^{-1}A(x)^{-1}A(x) Y_{0}(x)\nonumber,\\
 =& P(x).
\end{align}
Hence this function is $q$-periodic in $x$. 

These fundamental solutions may be related to \eqref{eq4:symbolicsols} via a conjugation of transformations of \eqref{eq2:linear} such that the solutions given by \eqref{eq4:symbolicsols} converge. Hence \eqref{eq4:symbolicsols} gives us information regarding the roots of the determinant of the connection matrix. If the $z_i$ are the zeros of $\det A(x)$, then $Y_{\infty}^{-1}$ and $Y_0$ are possibly singular at $\{ q^{n+1} z_i : n \in \mathbb{N}\}$ and $\{q^{-n} z_i : n \in \mathbb{N}\}$ respectively. Therefore we expect the poles and zeros of the determinant of $P(x)$ to be $q$-power multiples of the $z_i$. 

In the situation of monodromy preserving deformations we introduce a parameter, $t$, into $A$ and consider what conditions on the evolution of $t$ are required so that the monodromy representation is preserved. It was the innovation of Jimbo and Sakai \cite{Sakai:qP6} to introduce a parameter $t$ in a manner that preserves the connection matrix, $P(x)$, through the evolution $t \to q t$. By the observation noted in the above paragraph regarding the poles and roots of the determinant of $P(x)$, we may infer that the latter are preserved in the shift $z_j \to qz_j$. However, in doing this, we must also consider the eigenvalues of $A_0$ and $A_n$ to be parameters. This suggests that the zeros of the determinant of $A$ and eigenvalues of $A_0$ and $A_n$ are appropriate choices of parameter for the connection matrix preserving deformation. 

We parameterize $\rho_A = \det A$ by letting a subset of the roots be constant in $t$, while the other roots are simply proportional to $t$. By supposing $P(x,t) = P(x,qt)$ we arrive at the implication
\[
P(x,qt) = Y_\infty(x,qt)^{-1} Y_{0}(x,qt) = Y_\infty(x,t)^{-1} Y_{0}(x,t) = P(x,t),
\]
which defines a matrix, $B$, via 
\[
Y_0(x,qt)Y_0(x,t)^{-1} = Y_{\infty}(x,qt) Y_\infty(x,t)^{-1} = B(x,t).
\]
Since both $Y_0$ and $Y_\infty$ are independent solutions this leads to the necessary condition that the evolution of $Y$ must be governed by a second linear equation, 
\begin{equation}\label{eq2:qlineart}
Y(x,qt) = B(x,t)Y(x,t).
\end{equation}
Conversely, it is easy to see that if $Y$ satisfies \eqref{eq2:qlineart}, then the evolution $t\to q t$ defines a connection preserving deformation.

Since $Y$ satisfies an equation in $x$ and an equation in $t$ this imposes the necessary compatibility condition
\begin{equation}\label{eq3:Lax}
A(x,qt)B(x,t) = B(qx,t)A(x,t).
\end{equation}
This compatibility condition implies a $q$-difference equation satisfied by $\rho_B = \det B$,
\[
\rho_{B}(qx,t) = \frac{\rho_{A}(x,qt)}{\rho_{A}(x,t)}\rho_B(x,t),
\]
which may be solved up to a factor of a function of $t$. We shall assume that $\rho_B$ is rational in $x$. Other information regarding asymptotic behavior of $A$ gives us specific information regarding the form of $B$. Furthermore the compatibility condition and the determinantal constraints often results in an overdetermined system allowing us to construct a representation of $B$ in terms of the entries of $A$ and hence, find $q$-difference equations in $t$ for the entries of $A$.

We do not pursue this line explicitly. Rather, in the following sections we will show how the $q$-difference equations satisfied by a particular deformed $q$-semiclassical orthogonal polynomial system leads linear systems satisfying \eqref{eq2:linear} and \eqref{eq2:qlineart}.

\section{$q$-difference equations satisfied by orthogonal polynomials}

\subsection{The $q$-difference calculus and $q$-special functions}

A reference for the $q$-difference calculus and also the $h$-differential calculus is a book by Kac \cite{QuantumCalc}. 
We first recall some of the basic properties in relation to $q$-difference equations. The first property is the $q$-analog of the product and quotient rule, given by
\begin{subequations}\label{eq2:prodquot}
\begin{align}
D_{q,x} (f g) =& \overline{f}D_{q,x}g + g D_{q,x}f, \\
 =& fD_{q,x}g + \overline{g}D_{q,x}f\nonumber, \\
D_{q,x} \left( \frac{f}{g}\right) =& \frac{gD_{q,x}f - fD_{q,x}g}{g\overline{g}},
\end{align}
\end{subequations}
where $\overline{f} = \overline{f(x)} = f(qx)$. Associated with the $q$-difference operator is the antiderivative, known as Jackson's $q$-integral, as defined by Thomae and Jackson \cite{Jackson:integral, Thomae1, Thomae2}. We express the definite integral of Thomae \cite{Thomae1,Thomae2} by
\begin{equation}\label{eq2:Tomae}
\int_0^1 f(x)\mathrm{d}_q x = (1-q)\sum_{k=0}^{\infty} q^k f(q^k).
\end{equation}
This was subsequently generalized by Jackson \cite{Jackson:integral} to 
\begin{equation}\label{eq2:Jacksons}
\int_a^b f(x)\mathrm{d}_q x = \int_0^b f(x)\mathrm{d}_q x - \int_0^a f(x)\mathrm{d}_q x,
\end{equation}
where
\[
\int_0^a f(x)\mathrm{d}_q x = a(1-q)\sum_{k=0}^{\infty} q^k f(aq^k).
\]
If $f$ is continuous, then 
\[
\lim_{q\to 1} \int_0^a f(t)\mathrm{d}_q t = \int_0^a f(t)\mathrm{d}t.
\]
We introduce the $q$-Pochhammer symbol
\[
(a;q)_n = (1-a)(1-aq) \ldots (1-aq^{n-1}),
\]
and
\[
(a;q)_\infty = (1-a)(1-aq)\ldots.
\]
We also adopt the notation
\[
(a_1,a_2, \ldots, a_m;q)_n = (a_1;q)_n (a_2;q)_n \ldots (a_m;q)_n.
\]
The symbol $(a;q)_\infty$ is often called the $q$-exponential as
\[
D_{q,x} \left(ax;q \right)_\infty = \frac{-a(aqx;q)_\infty }{(1-q)}.
\]
We are now able to express Heine's basic hypergeometric function \cite{Heine:Hyper1}, as it was re-written by Thomae \cite{Thomae1, GasperRahman:Hypergeometric},
\begin{equation}\label{eq2:Hypergeometricsum}
{}_2\phi_1 \left(\left.\begin{array}{c} a,b \\ c \end{array}\right|q;t\right) = \sum_{m=0}^{\infty} t^m \frac{(a,b;q)_m}{(c,q;q)_m}.
\end{equation}
Also relevant is the integral formula for the basic hypergeometric function \cite{GasperRahman:Hypergeometric}
\begin{equation}\label{eq2:Hypergeometricintegral}
{}_2\phi_1 \left(\left.\begin{array}{c} a,b \\ c \end{array}\right|q;t\right) = \frac{\left(b,\frac{c}{b};q\right)_\infty}{(1-q)(c,q;q)_\infty}\int_0^1 x^{\frac{\log b}{\log q}-1} \frac{(xta,xq;q)_\infty}{\left(xt,\frac{xc}{b};q\right)_\infty} \mathrm{d}_q x.
\end{equation}

Jacobi's elliptic multiplicative theta function, as defined by Jacobi's triple product formula, may be expressed as
\[
\theta_q(z) = \left( q,-qz,-\frac{1}{z};q\right)_{\infty}.
\]
This satisfies the equation
\[
\theta_q(qz) = qz\theta_q(z). 
\]
Of importance is the $q$-character,
\[
e_{q,c}(x) = \frac{\theta_q (x)\theta_q(1/c)}{\theta_q(x/c)},
\]
satisfying
\begin{equation}\label{eq4:eprops}
e_{q,c}(qx) = c e_{q,c} (x), \qquad e_{q,qc}(x) = xe_{q,c}(x).
\end{equation}

\subsection{$q$-difference equation in $x$}

We shall henceforth assume that the log $q$-derivative of the weight specifying the linear form of the
$q$-orthogonal polynomial system is rational, parameterizing its $q$-derivative via the
equation 
\begin{equation}\label{eq2:qdifofweight}
W(x) D_{q,x} w(x) = 2V(x) w(x),
\end{equation}
where $W(x)$ and $V(x)$ are polynomials in $x$. This is the $q$-analogue of the notion of semi-classical weight functions 
familiar in the theory of classical orthogonal polynomials \cite{Ma_1985} and which was proposed by Magnus\cite{Magnus88}
and subsequently by others \cite{MM_2002,KM_2002}. 
Henceforth we regard systems satisfying these conditions as $q$-semi-classical orthogonal polynomial systems.

\begin{lemma}\label{lem:Dqxf}
Assuming $w$ satisfies \eqref{eq2:qdifofweight}, the Stieltjes function, $f$, satisfies the
 $q$-difference equation
\begin{equation}\label{eq2:StiltjerqDiffx}
W(x)D_{q,x}f(x) = 2V(x)f(x)+ U(x),
\end{equation}
where $U$ is polynomial such that $\deg U < \deg V$.
\end{lemma}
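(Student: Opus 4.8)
The plan is to carry over the $q$-analogue of Laguerre's method (as extended by Magnus) by realizing $f$ as a $q$-Cauchy transform of the weight and transferring the Pearson-type equation \eqref{eq2:qdifofweight} from $w$ onto $f$ through a $q$-integration by parts. First I would note that for large $|x|$ the geometric expansion of the Cauchy kernel gives
\[
f(x) = \sum_{n\ge 0}\mu_n x^{-n-1} = \int \frac{w(t)}{x-t}\,\mathrm{d}_q t,
\]
where the $q$-integral is the Jackson integral \eqref{eq2:Jacksons} defining $L$, the identity then being propagated to all $x$ by meromorphic continuation. A direct computation with \eqref{eq0:Dqdef} yields the kernel relation $D_{q,x}\frac{1}{x-t} = \frac{-1}{(x-t)(qx-t)}$, so that $D_{q,x}f(x) = -\int \frac{w(t)}{(x-t)(qx-t)}\,\mathrm{d}_q t$.

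The key step is a $q$-integration by parts, built from the product rule \eqref{eq2:prodquot} together with the antiderivative property of the Jackson integral \eqref{eq2:Tomae}, to rewrite this as the $q$-Cauchy transform of $D_{q,t}w$, namely $D_{q,x}f(x) = \int \frac{(D_{q,t}w)(t)}{x-t}\,\mathrm{d}_q t$; this is the exact $q$-counterpart of the classical fact that $q$-differentiating a Cauchy transform in $x$ returns the Cauchy transform of the $q$-derivative. Multiplying by $W(x)$, splitting $W(x) = W(t) + (W(x)-W(t))$ and invoking \eqref{eq2:qdifofweight} to replace $W(t)(D_{q,t}w)(t)$ by $2V(t)w(t)$, then splitting $V(t) = V(x) + (V(t)-V(x))$ against the same kernel, would produce
\[
W(x) D_{q,x}f(x) = 2V(x) f(x) + \int \widetilde W(x,t)\,(D_{q,t}w)(t)\,\mathrm{d}_q t - 2\int \widetilde V(x,t)\, w(t)\,\mathrm{d}_q t,
\]
where $\widetilde W(x,t) = \frac{W(x)-W(t)}{x-t}$ and $\widetilde V(x,t) = \frac{V(x)-V(t)}{x-t}$ are the divided differences, which are polynomials in both arguments. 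Since integrating a polynomial in $t$ against $w\,\mathrm{d}_q t$ returns the moments $\mu_k$, and against $(D_{q,t}w)\,\mathrm{d}_q t$ returns the constants $\int t^k D_{q,t}w\,\mathrm{d}_q t$, the last two terms assemble into a genuine polynomial $U(x)$, establishing \eqref{eq2:StiltjerqDiffx}.

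The degree bound is then read off from the behaviour at $x=\infty$. One has $\deg_x \widetilde W = \deg W - 1$ and $\deg_x \widetilde V = \deg V - 1$; equivalently, from $f = O(x^{-1})$ and $D_{q,x}f = -q^{-1}\mu_0 x^{-2} + O(x^{-3})$ one sees that $2V(x)f(x)$ contributes only powers up to $x^{\deg V - 1}$ while $W(x)D_{q,x}f(x)$ contributes only up to $x^{\deg W - 2}$. Hence $\deg U \le \max(\deg W - 2,\, \deg V - 1) < \deg V$, under the natural semi-classical degree relation $\deg W \le \deg V + 1$.

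I expect the main obstacle to be the $q$-integration-by-parts step. Unlike the continuous case, the $q$-shift produces the kernel $(qx-t)$ in $D_{q,x}f$, whereas $q$-differentiating the kernel in $t$ yields $(x-qt)$, so reconciling the two requires a careful index shift in the Jackson sum \eqref{eq2:Tomae} together with the vanishing of the boundary contributions $[t^k w]$ at the endpoints of the support — precisely the hypothesis that makes \eqref{eq2:qdifofweight} a bona fide $q$-semi-classical condition. A secondary point is to confirm the degree inequality $\deg W \le \deg V + 1$, which should be part of the normalization of \eqref{eq2:qdifofweight}.
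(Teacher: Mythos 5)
The paper offers no proof of Lemma \ref{lem:Dqxf} at all: it is stated bare, with the remark that such equations have been derived ``using different methods'' in \cite{IW_2001,Chen:qladder,Is_2003,IS_2009}. So there is no in-paper argument to match, and your $q$-Cauchy-transform derivation is supplying what the authors left to the literature; its architecture is the standard Laguerre--Magnus one and most of it checks out. In particular $D_{q,x}(x-t)^{-1} = -\left[(x-t)(qx-t)\right]^{-1}$ is correct, and the kernel mismatch you anticipate is indeed resolved by the shifted identity $D_{q,t}(qx-t)^{-1} = \left[q(x-t)(qx-t)\right]^{-1}$ together with the product rule \eqref{eq2:prodquot}; the divided-difference splitting reduces the remainder to moment data, so $U$ is polynomial; and your identification of $\deg W \le \deg V+1$ as an implicit normalization is accurate (the paper's example has $\deg W = 3$, $\deg V = 2$). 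One caution: your ``equivalently'' is not an equivalence --- the bound $\deg_x \widetilde W = \deg W - 1$ by itself only gives $\deg U \le \deg V$ when $\deg W = \deg V + 1$, and it is the asymptotic argument from $f = \mu_0 x^{-1} + O(x^{-2})$, $D_{q,x}f = -q^{-1}\mu_0 x^{-2} + O(x^{-3})$ that delivers the strict inequality; keep that as the operative step.

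The genuine imprecision is in your boundary hypothesis. You require $[t^k w]$ to vanish at the endpoints of the support, but for the paper's own weight \eqref{eq5:weight} this fails: the Jackson integral defining the moments runs from $qa_3$ to $qa_1 t$, and $w(qa_3),\, w(qa_1t) \neq 0$ --- the zeros of $w$ sit at $a_3$ and $a_1 t$, one lattice step below the endpoints. Consequently your intermediate identity $D_{q,x}f = \int (x-t)^{-1}(D_{q,t}w)(t)\,\mathrm{d}_q t$ acquires the nonvanishing terms $-w(qa_1t)/(x-a_1t) + w(qa_3)/(x-a_3)$. What rescues the lemma is that $W$ vanishes at $a_3$ and $a_1 t$: after multiplication by $W(x)$ these boundary terms become polynomials of degree $\deg W - 1$ with leading coefficient $c_W\left(w(qa_1t)-w(qa_3)\right)$, where $c_W$ is the leading coefficient of $W$, and this cancels exactly against the leading coefficient of $\int \widetilde W(x,t)\,D_{q,t}w(t)\,\mathrm{d}_q t$, which by the $q$-fundamental theorem equals $c_W\left(w(qa_1t)-w(qa_3)\right)$ as well; the asymptotic degree count then closes as before. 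So the correct semiclassical boundary condition is the $q$-analogue of the classical $[t^k W w]_{\partial} = 0$, i.e.\ $W$ vanishing at the ($q^{-1}$-scaled) endpoints of the support, not $w$ itself vanishing there; with that replacement --- a local fix --- your proof goes through and, unlike the version as written, actually covers the deformed little $q$-Jacobi system the paper applies the lemma to.
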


We call the polynomials $W$, $V$ and $U$ the spectral data polynomials. We will use \eqref{eq2:StiltjerqDiffx} as the basis for our derivation of the $x$-evolution for the $q$-semi-classical orthogonal systems. Using different methods, this  
has been derived in \cite{IW_2001,Chen:qladder,Is_2003,IS_2009} for general (i.e. beyond semi-classical) 
$q$-orthogonal polynomials.

\begin{theorem}\cite{Magnus88,Kh_2003}\label{thm:Dqxpn}
The matrix $Y_n$ satisfies
\begin{equation}\label{eq2:linearx}
D_{q,x}Y_n = A_n Y_n,
\end{equation}
where
\begin{align}\label{eq4:Aform}
A_n &= \frac{1}{(W(x)-2x(1-q)V(x))}\begin{pmatrix}
\Omega_n - V & -a_n \Theta_n \\
a_n \Theta_{n-1} & \Omega_{n-1} - V - (x-b_{n-1})\Theta_{n-1}
\end{pmatrix},
\end{align}
and $\Theta_n$ and $\Omega_n$ are polynomials specified by
\begin{subequations}\label{eq2:defThetaOmega}
\begin{align}
\label{eq2:defTheta}\Theta_n =& W (\epsilon_n D_{q,x} p_n - p_n D_{q,x} \epsilon_n) + 2V\epsilon_n\overline{p}_n,\\
\label{eq2:defOmega}\Omega_n =& a_n W (\epsilon_{n-1} D_{q,x} p_n - p_{n-1} D_{q,x} \epsilon_n) \\
  & \hspace{.5cm} + a_nV (p_n\epsilon_{n-1} + p_{n-1}\epsilon_n) - 2V x(1-q) a_n \epsilon_{n-1} D_{q,x} p_n. \nonumber
\end{align}
\end{subequations}
\end{theorem}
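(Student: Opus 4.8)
The plan is to exploit the invertibility of $Y_n$. Since $\det Y_n = (p_n\epsilon_{n-1} - p_{n-1}\epsilon_n)/w = 1/(a_n w)$ by \eqref{eq3:anpnen}, the two columns of $Y_n$, namely $(p_n, p_{n-1})^{T}$ and $(\epsilon_n/w, \epsilon_{n-1}/w)^{T}$, form a fundamental system, and the coefficient matrix is recovered as $A_n = (D_{q,x}Y_n)\,Y_n^{-1}$. Equivalently, I would solve the four scalar equations $D_{q,x}p_n = (A_n)_{11}p_n + (A_n)_{12}p_{n-1}$, and the analogues for $p_{n-1}$, $\epsilon_n/w$, $\epsilon_{n-1}/w$, for the entries $(A_n)_{ij}$ by Cramer's rule, so that each entry appears as a $q$-Wronskian of a polynomial solution and an associated-function solution divided by $\det Y_n$.

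First I would reduce every quotient $D_{q,x}(\epsilon_k/w)$ using the $q$-quotient rule \eqref{eq2:prodquot} together with the weight equation \eqref{eq2:qdifofweight}. The latter gives $D_{q,x}w = 2Vw/W$ and hence $\overline{w} = w(W - 2x(1-q)V)/W$, so that $w\,D_{q,x}(\epsilon_k/w) = (W D_{q,x}\epsilon_k - 2V\epsilon_k)/(W - 2x(1-q)V)$, which is exactly what produces the common denominator $W - 2x(1-q)V$ appearing in \eqref{eq4:Aform}. Substituting this into the Cramer expressions and clearing denominators, the $(1,1)$ and $(1,2)$ numerators collapse, after using $p_n - \overline{p}_n = x(1-q)D_{q,x}p_n$, precisely to $\Omega_n - V$ and $-a_n\Theta_n$; the constant $V$ and the factor $1/a_n$ enter through the determinant identity $p_n\epsilon_{n-1} - p_{n-1}\epsilon_n = 1/a_n$. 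The $(2,1)$ entry is the same computation with $n$ replaced by $n-1$ throughout, giving $a_n\Theta_{n-1}$.

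The messiest step will be the $(2,2)$ entry. The direct computation produces a numerator built from the mixed-index data $p_n, p_{n-1}, \epsilon_n, \epsilon_{n-1}$, whereas the asserted answer is written as $\Omega_{n-1} - V - (x-b_{n-1})\Theta_{n-1}$, in which $\Omega_{n-1}$ and $\Theta_{n-1}$ naturally involve the index $n-2$. To reconcile the two I would eliminate the index-$(n-2)$ data from $\Omega_{n-1}$ using the three term recurrence \eqref{eq1:3termrecurrence} in the forms $a_{n-1}p_{n-2} = (x-b_{n-1})p_{n-1} - a_n p_n$ and $a_{n-1}\epsilon_{n-2} = (x-b_{n-1})\epsilon_{n-1} - a_n\epsilon_n$. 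The terms proportional to $(x-b_{n-1})$ are then absorbed exactly by $-(x-b_{n-1})\Theta_{n-1}$, again using $p_{n-1} - \overline{p}_{n-1} = x(1-q)D_{q,x}p_{n-1}$, and the leftover $V$ is supplied once more by the determinant identity.

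Finally I would establish that $\Theta_n$ and $\Omega_n$ are genuinely polynomial, which is the conceptual heart of the theorem and the step I expect to require the most care. Here I would substitute $\epsilon_n = fp_n - \phi_{n-1}$ and use Lemma \ref{lem:Dqxf}, $W D_{q,x}f = 2Vf + U$, to eliminate every appearance of $W D_{q,x}f$. After doing so, the coefficient of the non-polynomial Stieltjes function $f$ in $\Theta_n$ is $2V(x(1-q)p_n D_{q,x}p_n - p_n^2 + p_n\overline{p}_n)$, which vanishes identically because $x(1-q)D_{q,x}p_n = p_n - \overline{p}_n$; what survives is a polynomial expression in $W, V, U, p_n, \overline{p}_n, \phi_{n-1}$ and their $q$-derivatives. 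The same cancellation disposes of the $f$-dependence in $\Omega_n$. This verification both proves polynomiality and confirms that the normalization $\deg U < \deg V$ of Lemma \ref{lem:Dqxf} is what keeps the degrees of the numerators consistent with the claimed form.
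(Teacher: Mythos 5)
Your proposal is correct, and it organizes the argument differently from the paper. I checked the computations: with $\det Y_n = 1/(a_n w)$ from \eqref{eq3:anpnen}, the entries of $(D_{q,x}Y_n)Y_n^{-1}$ do reduce, via $w\,D_{q,x}(\epsilon_k/w) = (W D_{q,x}\epsilon_k - 2V\epsilon_k)/(W-2x(1-q)V)$ and $x(1-q)D_{q,x}p_k = p_k - \overline{p}_k$, exactly to \eqref{eq4:Aform}; the stray $\pm V$ on the diagonal and the absorption of the $(x-b_{n-1})$ terms into $\Theta_{n-1}$ in the $(2,2)$ entry both come down to $a_n(p_n\epsilon_{n-1}-p_{n-1}\epsilon_n)=1$, as you say; and your claimed coefficient of $f$ in $\Theta_n$, namely $2V\left(x(1-q)p_nD_{q,x}p_n - p_n^2 + p_n\overline{p}_n\right)$, is precisely what the substitution $\epsilon_n = fp_n-\phi_{n-1}$ together with $WD_{q,x}f = 2Vf+U$ produces, and it vanishes identically; what remains is the paper's polynomial expression \eqref{eq4:thetapolyform}, and the analogous cancellation for $\Omega_n$ also works. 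The paper runs the logic in the opposite direction: it \emph{constructs} the scalar equation for $p_n$ from the splitting $f = \phi_{n-1}/p_n + \epsilon_n/p_n$ of the Stieltjes equation \eqref{eq2:StiltjerqDiffx}, defining $\Theta_n$ as the common value of two expressions (one exhibiting polynomiality, one bounding the degree), then building $\Omega_n$ by separating factors divisible by $\phi_{n-1}$ from those divisible by $p_n$ using $a_n\phi_{n-1}p_{n-1}-a_n\phi_{n-2}p_n=1$, obtaining the second row by the shift $n\to n-1$ plus the three-term recurrence, and finally verifying \emph{separately} that $\epsilon_n/w$ solves the same system via the intermediate relation $WD_{q,x}\epsilon_n = (\Omega_n+V)\epsilon_n - a_n\Theta_n\epsilon_{n-1}$. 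Your matrix-inversion route buys uniformity and economy: all four entries are handled by one Cramer computation, the second-column argument disappears entirely (it is automatic once $A_n := (D_{q,x}Y_n)Y_n^{-1}$ is shown to have the stated form), and your polynomiality proof for $\Omega_n$ is more direct than the paper's, since substituting $\epsilon_k = fp_k - \phi_{k-1}$ and cancelling $f$ avoids the elimination of $\phi_{n-2}$ via \eqref{eq4:anpnrelation}. What it gives up is discovery: your argument verifies the stated forms of $\Theta_n$ and $\Omega_n$ rather than deriving them, and it bypasses intermediate identities such as \eqref{eq3:diffphi} and the scalar equation for $\epsilon_n$, which the paper's construction produces as byproducts and which echo later in the $t$-deformation argument of Theorem \ref{thm:Dqtpn}.
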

\begin{proof}
Using \eqref{eq2:StiltjerqDiffx} and 
\[
f = \frac{\phi_{n-1}}{p_n} + \frac{\epsilon_n}{p_n},
\]
we find
\[
W D_{q,x} \left(\frac{\phi_{n-1}}{p_n}\right) -  \frac{2V\phi_{n-1}}{p_n} - U = -W D_{q,x}\left( \frac{\epsilon_n}{p_n} \right) + \frac{2V\epsilon_n}{p_n}.
\]
This may be expanded to give
\[
\frac{ W(p_n D_{q,x}\phi_{n-1} - \phi_{n-1} D_{q,x} p_n) - 2V\phi_{n-1} \overline{p}_n - Up_n\overline{p}_n}{p_n\overline{p}_n} = \frac{W (\epsilon_n D_{q,x} p_n - p_n D_{q,x} \epsilon_n) + 2V\epsilon_n\overline{p}_n}{p_n\overline{p}_n}.
\]
Setting this equal to $\Theta_n/(p_n \overline{p}_n)$ defines $\Theta_n$ as
\begin{align}
\label{eq4:thetapolyform}\Theta_n =& W(p_n D_{q,x}\phi_{n-1} - \phi_{n-1}D_{q,x} p_n ) - 2V \phi_{n-1}\overline{p}_n - Up_n\overline{p}_n,\\
 =& W (\epsilon_n D_{q,x} p_n - p_n D_{q,x} \epsilon_n) + 2V\epsilon_n\overline{p}_n\nonumber.
\end{align}
The first expression is a linear combination of polynomials which verifies that $\Theta_n$ is a polynomial for all $n$. The second expression is one in which every occurrence of a $p_n$ or its derivative is multiplied by a factor of $\epsilon_n$ or its derivative. Hence, by \eqref{eq1:largexexpansion}, there exists an upper bound for the degree of the polynomial $\Theta_n$ in $x$ that is independent of $n$. 

Using the fact that $\Theta_n$ is polynomial and 
\begin{equation}\label{eq4:anpnrelation}
a_n\phi_{n-1}p_{n-1} - a_n\phi_{n-2}p_n = 1,
\end{equation}
we have
\[
(a_n\phi_{n-1}p_{n-1} - a_n\phi_{n-2}p_n)\Theta_n = W(p_n D_{q,x}\phi_{n-1} - \phi_{n-1}D_{q,x} p_n ) - 2V \phi_{n-1}\overline{p}_n - Up_n\overline{p}_n.
\]
By appropriately equating factors divisible by $\phi_{n-1}$ on one side and factors divisible by $p_n$ on the other side, we let $\Omega_n$ be the common factor by writing 
\begin{align*}
p_n \phi_{n-1} \Omega_n =& \phi_{n-1}\left( a_n\Theta_n p_{n-1} + (W - 2Vx(1-q))D_{q,x} p_n + Vp_n \right),\\
 =& p_n \left( a_n \phi_{n-2} \Theta_n + WD_{q,x} \phi_{n-1} - V \phi_{n-1} - U \overline{p}_n\right).
\end{align*}
The first equality is a rearrangement of the required $q$-difference equation for $p_n$. The second expression for $\Omega_n$ is equivalent to
\begin{equation}\label{eq3:diffphi}
\Omega_n = \frac{a_n\phi_{n-2}\Theta_n}{\phi_{n-1}} + \frac{W D_{q,x} \phi_{n-1}}{\phi_{n-1}} - V - \frac{U\overline{p}_n}{\phi_{n-1}}.
\end{equation}
We use the expression for $\Theta_n$ in terms of $\phi_{n-1}$ and $p_n$ to give
\begin{align*}
\Omega_n =&  \frac{ W a_n\phi_{n-2}p_n D_{q,x} \phi_{n-1} }{\phi_{n-1}} - \frac{ W a_n\phi_{n-2}\phi_{n-1}D_{q,x} p_n }{\phi_{n-1}} - \frac{ 2V a_n\phi_{n-2}  \phi_{n-1}\overline{p}_n }{\phi_{n-1}}\\
 &  - \frac{ U a_n\phi_{n-2}p_n\overline{p}_n }{\phi_{n-1}}  + \frac{W D_{q,x} \phi_{n-1}}{\phi_{n-1}} - V - \frac{U\overline{p}_n}{\phi_{n-1}},
\end{align*}
which, by using the equality $a_n p_n\phi_{n-2} = a_n p_{n-1} \phi_{n-1}-1$ to eliminate occurrences of $\phi_{n-2}$, is equivalent to
\[
\Omega_n = a_n W (p_{n-1} D_{q,x} \phi_{n-1} - \phi_{n-2} D_{q,x} p_n)  - V( 2a_n\phi_{n-2}  \overline{p}_n + 1)  - U a_n p_{n-1}\overline{p}_n.
\]
This expresses $\Omega_n$ as a linear combination of polynomials and hence $\Omega_n$ is a polynomial. 
To obtain the large $x$ expansion, by dividing the first expression for $\Omega_n$ by $\phi_{n-1}p_n$ we have
\begin{equation}\label{eq3:omegaworking}
\Omega_n = \frac{a_np_{n-1}\Theta_n}{p_n}  + \frac{WD_{q,x} p_n}{p_n} + V - \frac{2x(1-q)VD_{q,x}p_n}{p_n}.
\end{equation}
Using \eqref{eq2:defTheta}, we find that $\Omega_n$ may be written as
\begin{equation*}
\Omega_n = \frac{a_np_{n-1}W\epsilon_n D_{q,x} p_n}{p_n}- \frac{a_n
Wp_{n-1}p_n D_{q,x}\epsilon_n}{p_n} +  \frac{2a_np_{n-1}V \epsilon_n  \overline{p}_n}{p_n}+
  \frac{WD_{q,x} p_n}{p_n} + V - \frac{2x(1-q)VD_{q,x}p_n}{p_n}.
\end{equation*}
Using \eqref{eq4:anpnrelation} to cancel the factors of $p_{n-1}$ gives \eqref{eq2:defOmega}.

A rearrangement of \eqref{eq3:omegaworking} is 
\[
(W - 2x(1-q)V)D_{q,x} p_n = (\Omega_{n}-V)p_n - a_n \Theta_n p_{n-1},
\]
where $\Omega_n$ and $\Theta_n$ are given by \eqref{eq2:defTheta} and \eqref{eq2:defOmega}. Mapping $ n \to n-1 $ in this relation and expressing $p_{n-2}$ in terms of $p_n$ and $p_{n-1}$ gives
\[
(W - 2x(1-q)V)D_{q,x} p_{n-1} = a_n\Theta_{n-1}p_n + (\Omega_{n-1} - V - (x-b_{n-1})\Theta_{n-1}) p_{n-1},
\]
showing that the orthogonal polynomials form a column vector solution of \eqref{eq2:linearx}.

To see that $\epsilon_n/w$ satisfies the same $q$-difference equation, we consider $W D_{q,x} \phi_{n-1} + WD_{q,x}\epsilon_n= WD_{q,x}(fp_n)$, which we reformulate as
\begin{align*}
WD_{q,x}\phi_{n-1} + WD_{q,x} \epsilon_n =& W\overline{p}_n D_{q,x} f + W f D_{q,x} p_n, \\ 
=& 2Vf\overline{p}_n + U\overline{p}_n + f\left( (\Omega_n-V)p_n - a_n\Theta_np_{n-1} + 2Vp_n - 2V\overline{p}_n\right), \\
=& \left[ U\overline{p}_n + (\Omega_n + V) \phi_{n-1} - a_n\Theta_n\phi_{n-2}\right] + \left[ (\Omega_n + V) \epsilon_n - a_n\Theta_n \epsilon_{n-1} \right].
\end{align*}
Hence by subtracting $WD_{q,x}\phi_{n-1}$, defined by \eqref{eq3:diffphi}, we find
\[
W D_{q,x} \epsilon_n = (\Omega_n+V)\epsilon_n - a_n\Theta_n \epsilon_{n-1},
\]
and from this
\begin{align*}
D_{q,x}\left( \frac{\epsilon_n}{w}\right) =& \frac{wD_{q,x}\epsilon_n - \epsilon_n D_{q,x} w}{w \overline{w}},\\
 =& \frac{ \frac{\displaystyle (\Omega_n+V)\epsilon_n - a_n\Theta_n \epsilon_{n-1}}{\displaystyle W}
           - \frac{\displaystyle 2V\epsilon_n}{\displaystyle W}}{w\left(1-2x(1-q)\frac{\displaystyle 2V}{\displaystyle W}\right)},\\
 =& \frac{(\Omega_n - V) \frac{\displaystyle \epsilon_n}{\displaystyle w} 
            - a_n\Theta_n \frac{\displaystyle \epsilon_{n-1}}{\displaystyle w}}{W - 2x(1-q)V},
\end{align*}
while the shift $n \to n-1$ and \eqref{eq1:3termrecurrence} gives the compatible evolution equation for $\epsilon_{n-1}/w$.
\end{proof}

Since $p_n$ is of degree $n$ and the leading order term of $\epsilon_n$ about $ x=\infty $ is $x^{-n-1}$, we immediately obtain upper bounds for the degrees of $\Theta_n$ and $\Omega_n$ \cite{Magnus88} 
\begin{subequations}\label{eq3:degthetaomega}
\begin{align}
\label{eq3:degtheta}\deg{\Theta_n} \leq& \max(\deg W-1, \deg V -2 ,0),\\
\label{eq3:degomega}\deg{\Omega_n} \leq& \max(\deg W, \deg V -1 ,0).
\end{align}
\end{subequations}
It follows that knowledge of $W$ and $V$ may be used in conjunction with \eqref{eq1:largexexpansion} to determine $\Omega_n$ and $\Theta_n$ in terms of sums and products of the $a_i$ and $b_i$.

We remark that \eqref{eq2:linearx} can be rewritten in a manner more familiar in the context of connection matrices where $Y_n$
is a solution of the linear $q$-difference equation
\begin{equation}\label{eq4:altdiffx}
Y_n(qx) = (I - x(1-q)A_n)Y_n(x) = \tilde{A}_nY_n(x).
\end{equation}
As revised in Section 3, under appropriate conditions on $\tilde{A}_n$ established by Carmichael \cite{Carmichael:General}, or the more general conditions of Adams \cite{Adams:General}, this equation permits two fundamental solutions: $Y_{\infty,n}$ and $Y_{0,n}$.
We form the connection matrix by using the fundamental solutions of \eqref{eq2:linearx} in \eqref{eq2:connectiondef} 
\begin{equation}\label{eq4:orthconnection}
P_n(x) = (Y_{\infty,n}(x))^{-1} Y_{0,n}(x).
\end{equation}
Solutions of \eqref{eq2:linearx} also satisfy \eqref{eq1:linearn} and so
the transformation $n \to n+1$ is a connection preserving deformation. This also implies that the connection matrix is independent of the consecutive polynomials chosen in the orthogonal system. Therefore the connection matrix is an invariant of the orthogonal polynomial system.

In further specifying the evolution in $x$ for $Y_n$, we have a compatibility relation between the evolution in $x$ and the evolution in $n$. This compatibility relation, defined by equating the two ways of evaluating $D_{q,x}Y_{n+1}$, namely $D_{q,x} M_n Y_n = A_{n+1} M_n Y_n$, results in the condition
\begin{equation}\label{eq3:compxn}
M_n(qx)A_n(x) + D_{q,x}M_n(x) = A_{n+1}(x)M_n(x).
\end{equation}
The entries of the first row are equivalent to the recurrence relations of Magnus \cite{Magnus88,Kh_2003}
\begin{subequations}\label{eq3:Fruedx}
\begin{align}
(x-b_n)(\Omega_{n+1} - \Omega_n) =& W - x(1-q)(\Omega_n+V) - a_n^2 \Theta_{n-1} + a_{n+1}^2 \Theta_{n+1}, \\
\Omega_{n-1} - \Omega_{n+1} =& (x-b_{n-1})\Theta_{n-1} - (qx-b_n)\Theta_n.
\end{align}
\end{subequations}
Using the relation \eqref{eq3:anpnen} we find 
\begin{equation}\label{eq3:determinantofYn}
\det Y_n = \frac{p_n\epsilon_{n-1} - p_{n-1}\epsilon_n}{w}= \frac{1}{a_nw}.
\end{equation}
From this we may deduce
\begin{equation}\label{eq3:detAntilde}
\det(I - x(1-q)A_n) = \frac{\det Y_n(qx)}{\det Y_n(x)} = \frac{w(x)}{w(qx)} = \frac{W(x)}{W(x) - 2x(1-q)V(x)},
\end{equation}
or equivalently
\begin{equation}
x(1-q)\det A_n - \mathrm{Tr} A_n = \frac{2V}{W-2x(1-q)V},
\end{equation}
and these in turn imply the additional recurrence relation 
\begin{equation}\label{eq3:detrelinomega}
x(1-q)\left(\Omega_n^2 - V^2 - a_n^2 \Theta_n \Theta_{n-1}\right) = ((x-b_{n-1})\Theta_{n-1} - \Omega_{n-1} - \Omega_n)(W - x(1-q)(\Omega_n+V)).
\end{equation}

The compatibility between the evolution in $x$ and $n$ for more general orthogonal polynomials has given rise to associated linear problems for discrete Painlev\'e equations \cite{Nijhoff, VanAssche}. Many of these associated linear problems are differential-difference systems \cite{Its}. That is to say that the evolution in $x$ is defined by a differential equation, while the evolution of $n$ is discrete. The first occurrence of a discrete Painlev\'e equation in the literature is thought to have 
been deduced in this manner \cite{Shohat:Orth}.

The $2\times 2$ linear problem derived for orthogonal polynomials is one in which the coefficient matrix, $\tilde{A}_n(x)$, is rational. If we follow the theory of connection matrices, we apply a transformation that relates the linear problem in which $\tilde{A}_n$ is rational to another linear problem in which the coefficient matrix is polynomial. With respect to Birkhoff theory and \eqref{eq2:linear}, the coefficient matrix obeys the  proportionality constraint
\[
\det{A} \propto W(W - 2x(1-q)V).
\]
That is to say that when referring to the connection matrix for orthogonal polynomial systems, we do not distinguish between the roots and the poles of the determinant of the linear problem. 

\subsection{$q$-difference equation in $t$}

We now turn to the new direction that we wish to present. It is at this point we let $w(x) = w(x,t)$, and hence consider polynomials which are both functions of $x$ and $t$. For functions $f(x,t)$ with two independent parameters we will adopt the notation
\begin{gather*}
\overline{f} = \overline{f(x,t)} = f(qx,t),\\
\hat{f} = \widehat{f(x,t)} = f(x,qt).
\end{gather*}
We distinguish a base case in which $\deg \Theta_n = 0$ and $\deg \Omega_n = 1$, corresponding to $\deg V = 1$ and $\deg W = 2$, as being completely solvable and a case in which the connection matrix is known \cite{LeCaine1943}. However by suitably adjoining $q$-exponential factors that depend simply on $t$ to the numerator or denominator of $w(x)$ we introduce roots or poles into $w(qx)/w(x)$. This has the effect of increasing up the degree of $W(x)$ and $V(x)$. Furthermore it imposes a rational character on the logarithmic $q$-derivative of $w$ with respect to $t$ :
\begin{equation}\label{eq3:weightrationaltderive}
R(x,t) D_{q,t} w(x,t) = 2 S(x,t) w(x,t),
\end{equation}
where $R(x,t)$ and $S(x,t)$ are polynomials in $x$. These cannot be arbitrary polynomials in $x$ as there is an implied compatibility condition. This arises because there are two ways of calculating the mixed derivatives of $w$, namely $D_{q,x}D_{q,t}w(x,t)$ and $D_{q,t}D_{q,x} w(x,t)$, the equality of which imposes the constraint
\begin{equation}\label{eq3:compWVRS}
\frac{2\hat{V}}{\hat{W}} \frac{2S}{R} - \frac{2\overline{S}}{\overline{R}}\frac{2V}{W} = D_{q,x} \frac{2S}{R} - D_{q,t} \frac{2V}{W}.
\end{equation}
A consequence of \eqref{eq3:weightrationaltderive} is the following companion result to Lemma \ref{lem:Dqxf}.

\begin{lemma}\label{lem:Dqtf}
The Stieltjes function, $f$, satisfies the $q$-difference equation
\begin{equation}\label{eq2:StiltjerqDifft}
RD_{q,t}f = 2Sf+ T,
\end{equation}
where $T(x,t)$ is polynomial such that $\deg_x T < \deg_x S$.
\end{lemma}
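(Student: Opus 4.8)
The plan is to mirror the derivation of Lemma~\ref{lem:Dqxf} exactly, replacing the roles of $(W,V,U)$ and $D_{q,x}$ by $(R,S,T)$ and $D_{q,t}$. The starting point is the defining relation $f = \phi_{n-1}/p_n + \epsilon_n/p_n$ together with the new weight equation \eqref{eq3:weightrationaltderive}, which plays precisely the part that \eqref{eq2:qdifofweight} played in the proof of Lemma~\ref{lem:Dqxf}. The key structural observation, which makes the proof work, is that the Stieltjes function $f = \sum_{n\ge 0}\mu_n x^{-n-1}$ depends on $t$ only through the moments $\mu_n = \mu_n(t)$, while the \emph{variable} $x$ is inert under $D_{q,t}$. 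Consequently $R D_{q,t} f$ is again a formal power series in $x^{-1}$ whose coefficients are the $q$-$t$-derivatives of the moments against $R$, and the rationality in \eqref{eq3:weightrationaltderive} is exactly what forces this series to collapse to the rational form $2Sf + T$.

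First I would apply $D_{q,t}$ to the integral/summation representation of $\mu_n$, using the Jackson $q$-integral \eqref{eq2:Jacksons} and the $q$-product rule \eqref{eq2:prodquot} to push the operator onto $w(x,t)$, then invoke \eqref{eq3:weightrationaltderive} to replace $R\,D_{q,t}w$ by $2Sw$. This is the step that produces the polynomial $S$ multiplying $f$. Because $S$ is a polynomial in $x$ of some fixed degree $\deg_x S$, multiplication of $f$ by $2S$ introduces, from the head terms of the series $\sum \mu_n x^{-n-1}$, a finite polynomial part in $x$ together with a genuine $O(x^{-1})$ tail; collecting the polynomial part defines $T$. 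The identity $R D_{q,t} f = 2Sf + T$ then follows by matching the remaining negative powers of $x$ on both sides, exactly as the coefficient-matching argument forced the analogous identity in the $x$-direction.

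It remains to pin down the degree bound $\deg_x T < \deg_x S$. The cleanest route is to examine the large-$x$ asymptotics: since $f = \mu_0 x^{-1} + O(x^{-2})$ vanishes at $x=\infty$, the product $2Sf$ is a polynomial of degree $\deg_x S - 1$ plus a decaying tail, and $R D_{q,t} f$ decays at infinity because $D_{q,t}$ does not alter the $x^{-n-1}$ falloff while $R$ is polynomial of degree governed by the compatibility constraint \eqref{eq3:compWVRS}. Subtracting, the polynomial $T$ inherits the degree of the polynomial part of $2Sf$, giving $\deg_x T \le \deg_x S - 1 < \deg_x S$. I expect the main obstacle to be bookkeeping rather than conceptual: one must track carefully how the factors $R$ and $S$ (which now carry the $q$-exponential-in-$t$ factors adjoined to $w$) interact with the head of the moment series so that no spurious high-degree term in $x$ survives in $T$. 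The compatibility condition \eqref{eq3:compWVRS} guarantees the construction is consistent with the already-established $x$-equation \eqref{eq2:StiltjerqDiffx}, so no contradiction can arise; it is purely a matter of verifying the degree count honestly.
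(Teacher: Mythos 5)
The paper never actually proves this lemma --- it is stated as an unproved ``companion result'' to Lemma~\ref{lem:Dqxf} --- so your proposal must be judged against the argument the authors leave implicit, which is indeed the route you take: since $D_{q,t}$ leaves $x$ inert, $D_{q,t}f=\sum_{k\ge0}\bigl(D_{q,t}\mu_k\bigr)x^{-k-1}$, one pushes $D_{q,t}$ under the $q$-integral onto $w$ via \eqref{eq3:weightrationaltderive}, and $T$ is the polynomial part of $RD_{q,t}f-2Sf$; writing $R=\sum_j r_j x^j$ and $2S=\sum_j s_j x^j$, the cancellation of all negative powers of $x$ is exactly the integrated moment identity $\sum_j r_j D_{q,t}\mu_{k+j}=\sum_j s_j\mu_{k+j}$ for all $k\ge0$, obtained by pairing $x^k$ with the weight equation. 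One point you gloss over should be made explicit: the support of \eqref{eq5:weight} has a $t$-dependent endpoint (e.g.\ $qa_1t$), so interchanging $D_{q,t}$ with the Jackson integral \eqref{eq2:Jacksons} produces a boundary term proportional to $(qa_1t)^k\,w(qa_1t,qt)$; this vanishes precisely because the support is chosen to terminate at roots of $w$, and without that observation ``push the operator onto $w(x,t)$'' is not justified. (Also, your opening appeal to $f=\phi_{n-1}/p_n+\epsilon_n/p_n$ plays no role in the lemma and should be deleted.)

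The genuine flaw is in your degree count: the assertion that ``$RD_{q,t}f$ decays at infinity'' is false whenever $\deg_x R\ge1$. What decays is $D_{q,t}f=O(x^{-1})$; multiplying by the polynomial $R$ leaves a polynomial part of degree $\deg_x R-1$ whose top coefficient is $r_{\deg_x R}\,D_{q,t}\mu_0$, which does not vanish in general. The correct output of the coefficient-matching argument is therefore $\deg_x T\le\max(\deg_x R,\deg_x S)-1$, and the bound claimed in the lemma, $\deg_x T<\deg_x S$, requires the additional hypothesis $\deg_x R\le\deg_x S$. That hypothesis does hold in the paper's setting --- for the deformed little $q$-Jacobi weight both $R=a_1(x-qa_2t)t(1-q)$ and $2S=a_1(x-qa_2t)-a_2(x-qa_1t)$ have $x$-degree one --- but it is not delivered by the compatibility condition \eqref{eq3:compWVRS}, to which you vaguely appeal: that condition constrains $(R,S)$ jointly with $(W,V)$ but does not by itself bound $\deg_x R$ by $\deg_x S$. (The same issue is hidden in the companion Lemma~\ref{lem:Dqxf}, where $D_{q,x}f=O(x^{-2})$ and the bound $\deg U<\deg V$ needs $\deg W\le\deg V+1$.) Once you state $\deg_x R\le\deg_x S$ as a standing assumption on the class of deformations --- as the paper implicitly does --- your argument closes correctly.
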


Another compatibility relation is implied by $D_{q,t}D_{q,x} f = D_{q,x}D_{q,t} f$ in conjunction with \eqref{eq2:StiltjerqDiffx} and \eqref{eq2:StiltjerqDifft}. This relation can be stated as
\begin{equation}\label{eq4:Tcomp}
\frac{2\hat{V}}{\hat{W}}\frac{T}{R} - \frac{2\overline{S}}{\overline{R}}\frac{U}{W} = D_{q,x} \frac{T}{R} - D_{q,t} \frac{U}{W}.
\end{equation}
When \eqref{eq3:compWVRS} and \eqref{eq4:Tcomp} are satisfied a companion result to Theorem \ref{thm:Dqxpn} can be stated.

\begin{theorem}\label{thm:Dqtpn}
The matrix $Y_n$ is a solution to
\begin{equation}\label{eq2:lineart}
D_{q,t}Y_n = B_n Y_n,
\end{equation}
where
\begin{align}\label{eq4:Bform}
B_n =\frac{1}{(R - 2t(1-q)S)} \begin{pmatrix}
\Psi_n - S & -a_n \Phi_n \\
a_n \Phi_{n-1} & \Psi_{n-1} - S - (x-b_{n-1})\Phi_{n-1}
\end{pmatrix},
\end{align}
with $\Phi_n$ and $\Psi_n$ polynomials in $x$ specified by
\begin{subequations}\label{eq2:defPhiPsi}
\begin{align}
\label{eq2:defPhi}\Phi_n =&  R\left( \epsilon_n D_{q,t} p_n - p_n D_{q,t} \epsilon_n \right)  + 2S \epsilon_n \hat{p}_n ,\\
\label{eq2:defPsi}\Psi_n =& a_n R_n ( \epsilon_{n-1} D_{q,t} p_n - p_{n-1} D_{q,t} \epsilon_n ) + S ( 2a_n\epsilon_{n-1} \hat{p}_n - 1).
\end{align}
\end{subequations}
\end{theorem}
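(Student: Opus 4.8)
The plan is to run the argument of Theorem~\ref{thm:Dqxpn} essentially verbatim under the dictionary $D_{q,x}\mapsto D_{q,t}$, $W\mapsto R$, $V\mapsto S$, $U\mapsto T$, $\overline{p}_n\mapsto\hat{p}_n$, $\Theta_n\mapsto\Phi_n$ and $\Omega_n\mapsto\Psi_n$. The only structural input needed in the $x$-case was the Stieltjes equation \eqref{eq2:StiltjerqDiffx}, and its $t$-analogue \eqref{eq2:StiltjerqDifft} is precisely Lemma~\ref{lem:Dqtf}, so the same derivation is available. First I would insert the decomposition $f=\phi_{n-1}/p_n+\epsilon_n/p_n$ into \eqref{eq2:StiltjerqDifft} and apply the $q$-quotient rule for $D_{q,t}$ from \eqref{eq2:prodquot}. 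Collecting both sides over the common denominator $p_n\hat{p}_n$ gives two expressions for the numerator, which I take as the definition of $\Phi_n$: the form
\[
\Phi_n = R\left(p_n D_{q,t}\phi_{n-1}-\phi_{n-1}D_{q,t}p_n\right)-2S\phi_{n-1}\hat{p}_n-Tp_n\hat{p}_n
\]
is a linear combination of polynomials, proving $\Phi_n$ polynomial, while the equivalent form \eqref{eq2:defPhi} pairs every $p_n$-factor with an $\epsilon_n$-factor, so the large-$x$ expansions \eqref{eq1:largexexpansion} bound $\deg_x\Phi_n$ uniformly in $n$.

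Next I would introduce $\Psi_n$ as $\Omega_n$ was introduced. The identity \eqref{eq4:anpnrelation} is an $x$-identity holding for each fixed $t$, so multiplying $\Phi_n$ by its left-hand side and separating the terms divisible by $\phi_{n-1}$ from those divisible by $p_n$ exhibits $\Psi_n$ as the common cofactor through
\[
p_n\phi_{n-1}\Psi_n = \phi_{n-1}\left(a_n\Phi_n p_{n-1}+(R-2t(1-q)S)D_{q,t}p_n+Sp_n\right) = p_n\left(a_n\phi_{n-2}\Phi_n+RD_{q,t}\phi_{n-1}-S\phi_{n-1}-T\hat{p}_n\right).
\]
Using \eqref{eq4:anpnrelation} to eliminate $\phi_{n-2}$ rewrites $\Psi_n$ as a linear combination of polynomials, so it too is polynomial, and dividing the first cofactor by $\phi_{n-1}p_n$ and substituting \eqref{eq2:defPhi} reproduces \eqref{eq2:defPsi}. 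The denominator $R-2t(1-q)S$ appearing here is the one nonformal point: it arises from $\hat{p}_n=p_n-t(1-q)D_{q,t}p_n$, the $t$-analogue of $\overline{p}_n=p_n-x(1-q)D_{q,x}p_n$, and must be traced through rather than copied from the $x$-case.

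The first cofactor equation rearranges to $(R-2t(1-q)S)D_{q,t}p_n=(\Psi_n-S)p_n-a_n\Phi_n p_{n-1}$, the top row of \eqref{eq2:lineart}; the shift $n\mapsto n-1$ together with the recurrence \eqref{eq1:3termrecurrence} supplies the bottom row. That $\epsilon_n/w$ solves the same system then mirrors the closing computation of Theorem~\ref{thm:Dqxpn}: evaluate $RD_{q,t}(fp_n)$ two ways by the product rule, substitute \eqref{eq2:StiltjerqDifft} and the $p_n$-equation, subtract the polynomial part $RD_{q,t}\phi_{n-1}$, and apply the quotient rule to $\epsilon_n/w$ with $D_{q,t}w$ given by \eqref{eq3:weightrationaltderive}. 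I anticipate no genuine obstacle, because the $t$-dependence enters only through the coefficients $\gamma_n(t),a_n(t),b_n(t)$, so $D_{q,t}p_n$ stays a polynomial of degree $\le n$ and $D_{q,t}\epsilon_n\sim x^{-n-1}$, keeping every degree estimate intact; the compatibility relations \eqref{eq3:compWVRS} and \eqref{eq4:Tcomp}, while part of the standing hypotheses, are not invoked in this argument and serve only to render the combined $x$- and $t$-evolutions consistent.
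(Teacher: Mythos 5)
Your proposal is correct and follows the paper's own proof essentially step for step: the same insertion of $f = \phi_{n-1}/p_n + \epsilon_n/p_n$ into Lemma~\ref{lem:Dqtf}, the same pair of cofactor expressions defining $\Psi_n$ via \eqref{eq4:anpnrelation}, the same rearrangement to the top row followed by the shift $n \to n-1$ with \eqref{eq1:3termrecurrence} for the bottom row, and the same treatment of $\epsilon_n/w$ by analogy with Theorem~\ref{thm:Dqxpn} (which the paper itself leaves as a sketch, so your slightly more explicit outline there is if anything a small improvement, as is your correct observation that \eqref{eq3:compWVRS} and \eqref{eq4:Tcomp} are standing hypotheses not invoked in the derivation itself).
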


\begin{proof}
Our strategy is to adapt the proof of Theorem \ref{thm:Dqxpn} to the $D_{q,t}$ operator.
Using $R D_{q,t} f = 2 S f + T$ and $f = \frac{\phi_{n-1}}{p_n} + \frac{\epsilon_n}{p_n}$ we have
\[
R D_{q,t} f = R D_{q,t}\left( \frac{\phi_{n-1}}{p_n} + \frac{\epsilon_n}{p_n}\right) = 2S \frac{\phi_{n-1} + \epsilon_n}{p_n} + T.
\]
This suggests that we define 
\begin{align}
\label{eq4:Phipolyform}\Phi_n =& R\left( p_n D_{q,t} \phi_{n-1} - \phi_{n-1} D_{q,t} p_n \right)  - 2S\phi_{n-1} \hat{p}_n - T p_n \hat{p}_n, \\ 
 =& R\left( \epsilon_n D_{q,t} p_n - p_n D_{q,t} \epsilon_n \right)  + 2S \epsilon_n \hat{p}_n. \nonumber
\end{align}
Expression \eqref{eq4:Phipolyform} tells us $\Phi_n$ is a polynomial in $x$ while \eqref{eq2:defPhi}
implies a bound on the degree.

Using \eqref{eq2:defPhi} and \eqref{eq4:anpnrelation} we arrive at 
\[
a_n(p_{n-1}\phi_{n-1} - p_n \phi_{n-2})\Phi_n = R\left( p_n D_{q,t} \phi_{n-1} - \phi_{n-1} D_{q,t} p_n \right)  - 2S\phi_{n-1} \hat{p}_n - T p_n \hat{p}_n.  
\]
By splitting this expression into terms divisible by $\phi_{n-1}$ and $p_n$, we arrive at an equality that defines $\Psi_n$, given by
\begin{align*}
\phi_{n-1}p_n \Psi_n =& \phi_{n-1}\left( a_n p_{n-1} \Phi_n + (R - 2t(1-q)S)D_{q,t} p_n + S p_n \right), \\
 =& p_n \left( a_n \phi_{n-2} \Phi_n + RD_{q,t} \phi_{n-1} - S \phi_{n-1} - T \hat{p}_n \right).
\end{align*}
The first line is just a rearrangement of the required $q$-difference equation, in $t$, for $p_n$. The second expression is equivalent to
\begin{align*}
\Psi_n =&  \frac{a_n \phi_{n-2} \Phi_n}{\phi_{n-1}} + \frac{RD_{q,t} \phi_{n-1}}{\phi_{n-1}} - S - \frac{T \hat{p}_n}{\phi_{n-1}},\\
 =& \frac{Ra_n \phi_{n-2}p_nD_{q,t} \phi_{n-1}}{\phi_{n-1}} - \frac{Ra_n \phi_{n-2}\phi_{n-1}D_{q,t}p_n}{\phi_{n-1}} - \frac{2Sa_n \phi_{n-2}\phi_{n-1}\hat{p}_n}{\phi_{n-1}}\\ 
  & - \frac{Ta_n \phi_{n-2}p_n \hat{p}_n }{\phi_{n-1}} + \frac{RD_{q,t}\phi_{n-1}}{\phi_{n-1}} - S - \frac{T \hat{p}_n}{\phi_{n-1}},\\
 =& a_nR(p_{n-1} D_{q,t}\phi_{n-1} - \phi_{n-2}D_{q,t}p_n) - S(2a_n\phi_{n-2} \hat{p}_n - 1) - a_nT p_{n-1} \hat{p}_n.
\end{align*}
We remark that this, being a linear combination of polynomials, implies $\Psi_n$ is a polynomial in $x$. 
Using \eqref{eq2:defPhi} in the first expression for $\Psi_n$ allows us to write 
\begin{equation*}
 \Psi_n =  \frac{R a_n p_{n-1}\epsilon_n D_{q,t} p_n}{p_n} - R a_n p_{n-1} D_{q,t} \epsilon_n +
  \frac{2S a_n p_{n-1}\epsilon_n \hat{p}_n}{p_n} + \frac{RD_{q,t}p_n}{p_n} - \frac{2t(1-q)S D_{q,t} p_n}{p_n} + S ,
\end{equation*}
which upon noting $a_np_{n-1}\epsilon_n = a_n p_n \epsilon_{n-1} - 1$ implies \eqref{eq2:defPsi}. 

The working to date shows
\[
(R- 2t(1-q)S)D_{q,t}p_n = (\Psi_n - S)p_n - a_n\Phi_n p_{n-1}.
\]
Replacing $n$ by $n-1$ in this expression, then using
\eqref{eq1:3termrecurrence} to express $p_{n-2}$ in terms of $p_n$ and $p_{n-1}$, establishes 
that $p_n$ and $p_{n-1}$ form a column vector solution of \eqref{eq2:lineart}. 

The derivation of the $q$-difference equation in $t$ for $\epsilon_n / w$ may also be derived in an analogous manner to the proof of Theorem \ref{thm:Dqxpn}, so we refrain from the giving the details.
\end{proof}

We note that $D_{q,t}$ does not necessarily alter the degree in $x$, hence, the upper bounds for the degrees of $\Phi$ and $\Psi$ are given by 
\begin{subequations}\label{eq3:degPhiPsi}
\begin{align}
\deg_x \Phi_n \leq& \max(\deg_x S-1, \deg_x R-1,0 ), \\
\deg_x \Psi_n \leq& \max(\deg_x S, \deg_x R,0 ).
\end{align}
\end{subequations}
Further to this, we may use \eqref{eq1:largexexpansion} to determine coefficients in terms of the $a_i$ and $b_i$. 

Equation \eqref{eq2:lineart} may be rewritten in the
context of connection preserving deformations to read
\begin{equation}\label{eq4:altdifft}
Y_n(x,qt) = (I - t(1-q)B_n(x,t))Y_n = \tilde{B}_n(x,t) Y_n(x,t).
\end{equation}
We use this relation and \eqref{eq4:orthconnection} to deduce
\begin{align*}
P_n(x,q t) =& (Y_{\infty,n}(x,q t))^{-1} Y_{0,n}(x,qt),\\
 =& (Y_{\infty,n}(x,t))\tilde{B}_n(x,t)^{-1} \tilde{B}_n(x,t) Y_{0,n}(x,t),\\
 =& P_n(x,t),
\end{align*}
which shows us that the connection is preserved under deformations in $t$.

Since $Y_n$ satisfies \eqref{eq1:linearn} and \eqref{eq2:lineart} we have a compatibility condition, which follows from a consideration of $D_{q,t}Y_{n+1}$,
\begin{equation}\label{eq3:comptn}
M_n(x,qt)B_n(x,t) + D_{q,t}M_n(x,t) = B_{n+1}(x,t)M_n(x,t).
\end{equation}
The first row of \eqref{eq3:comptn} is equivalent to 
\begin{subequations}\label{eq3:Fruedt}
\begin{gather}
 \frac{x-b_n}{a_{n+1}} \left[R-(1-q)t(S+\Psi_{n+1})\right] + (1-q)ta_{n+1}\Phi_{n+1} = \\
 \hspace{1cm} \frac{x-\hat{b}_n}{\hat{a}_{n+1}}\left[R-(1-q)t(S+\Psi_n)\right] +       
 \frac{(1-q)ta_n\hat{a}_n\Phi_{n-1}}{\hat{a}_{n+1}}\nonumber, \\
 \frac{a_n}{\hat{a}_{n+1}} \left[(1-q)t(x-\hat{b}_n)\Phi_n\right] + \frac{a_n}{a_{n+1}} \left[R-(1-q)t(S+\Psi_{n+1})\right]=  \\
 \hspace{1cm} \frac{\hat{a}_n}{\hat{a}_{n+1}}\left[R-(1-q)t(S + \Psi_{n-1} - (x-b_{n-1})\Phi_{n-1})\right] \nonumber.
\end{gather}
\end{subequations}
We have an additional relation 
\begin{equation}\label{eq3:detBntilde}
\det(I - t(1-q)B_n) = \frac{\det \hat{Y}_n}{\det Y_n} = \frac{a_n w}{\hat{a}_n \hat{w}} = \frac{a_nR}{\hat{a}_n(R - 2t(1-q)S)}.
\end{equation}
A consequence of this relation is the first order recurrence relation in $n$, given by
\begin{gather}
 \hat{a}_n \left((q-1) t \left(\Psi _n+S\right)+R\right) \left((q-1)t   
 \left(\Phi_{n-1}\left(b_{n-1}-x\right)+\Psi_{n-1}+S\right)+R\right)\\ 
  +(q-1)^2 t^2 a_n^2 \Phi _{n-1} \Phi _n \hat{a}_n-R a_n (2 (q-1) S t+R)= 0 . \nonumber 
\end{gather}

The compatibility condition \eqref{eq3:compxn} is naturally paired with \eqref{eq3:comptn}. Thus, rewriting these read
\begin{subequations}\label{eq3:compxnttilde}
\begin{align}
\label{eq3:compxntilde} \tilde{A}_{n+1}(x,t) M_n(x,t) =& M_n(qx,t)\tilde{A}_n(x,t),\\
\label{eq3:comptntilde} \tilde{B}_{n+1}(x,t) M_n(x,t) =& M_n(x,qt)\tilde{B}_n(x,t).
\end{align}
A further identity which can be grouped with these follows from the compatibility imposed by the requirement that $D_{q,t}D_{q,x} Y_n = D_{q,x} D_{q,t} Y_n$. One computes
\begin{equation}
\label{eq3:compxttilde} \tilde{A}_n(x,qt) \tilde{B}_n(x,t) = \tilde{B}_n(qx,t)\tilde{A}_n(x,t),
\end{equation}
\end{subequations}
which is equivalent to (\ref{eq3:Lax}).

\section{Deformed little $q$-Jacobi Polynomials}

The little $q$-Jacobi polynomials were introduced by Hahn \cite{Hahn}. This family of polynomials possesses the orthogonality relation  \cite{askey}
\[
\int_0^1 \frac{x^{\sigma}(qxb;q)_{\infty}}{(qx;q)_{\infty}} p_i(x) p_j(x)\mathrm{d}_qx = \delta_{ij}.
\]
This ratio of two exponential factors may be scaled and chosen appropriately so that the root and pole 
is at $a_3$ and $a_4$ respectively. We now adjoin two roots that are proportional to $t$, $a_1t$ 
and $a_2t$, to give the deformed weight
\begin{equation}\label{eq5:weight}
w(x,t) = \frac{x^\sigma\left( \frac{x}{a_1t}, \frac{x}{a_3} ;q\right)_\infty}{\left( \frac{x}{ta_2}, \frac{x}{a_4} ;q\right)_\infty}.
\end{equation}
In keeping with the notation of \cite{Sakai:qP6}, we trust there is no ambiguity 
between the terms in the three term recursion relation, $a_n$, and the roots of the 
determinant, $a_i$. The deformed polynomials associated with \eqref{eq5:weight} satisfy
\[
L(p_ip_j) = \int_S w(x,t) p_i(x,t)p_j(x,t)d_q x = \delta_{ij}.
\]
The set $S$, also called the support of the weight, may begin and end at distinct roots of $w(x,t)$. These include $a_3$, $a_1t$ and $0$. 
Choosing $a_3$ and $a_1t$ and using \eqref{eq2:Hypergeometricintegral} allows the
moments to be expressed in terms of Heine's basic hypergeometric function,
\begin{align*}
\mu_k =& \int_{qa_3}^{qa_1t} \frac{x^{\sigma+k} \left( \frac{x}{a_1t}, \frac{x}{a_3} ;q\right)_\infty}{\left( \frac{x}{ta_2}, \frac{x}{a_4} ;q\right)_\infty} d_q x, \\
 =& \frac{(qa_1t)^{\sigma+k+1} (1-q)\left( \frac{a_1q^{\sigma+k+2}}{a_2},q;q\right)_\infty}{\left(q^{\sigma+k+1},\frac{qa_1}{a_2};q\right)_{\infty}} {}_2\phi_1 \left(\begin{array}{c |} \frac{a_4}{a_3} , q^{\sigma+k+1} \\ \frac{a_1q^{\sigma+k+2}}{a_2} \end{array} \hspace{.1cm} q; \frac{qa_1t}{a_4} \right) \\
  &+ \frac{(qa_3)^{\sigma+k+1} (1-q)\left( \frac{a_3q^{\sigma+k+2}}{a_4},q;q\right)_\infty}{\left(q^{\sigma+k+1},\frac{qa_3}{a_4};q\right)_{\infty}} {}_2\phi_1 \left(\begin{array}{c |} \frac{a_2}{a_1} , q^{\sigma+k+1} \\ \frac{a_3q^{\sigma+k+2}}{a_4} \end{array} \hspace{.1cm} q; \frac{qa_3}{a_2t} \right). 
\end{align*}
This allows us to use \eqref{eq1:anandbndets} to express $a_n$ and $b_n$ in terms of 
determinants of basic hypergeometric functions.

This weight (\ref{eq5:weight}) satisfies the equation
\[
D_{q,x} w(x,t) = \left(\frac{a_2a_4 (x- a_1t)(x- a_3) - q^\sigma a_1a_3(x- a_2t)(x- a_4)}{a_2a_4 (x- a_1t)(x- a_3)x(1-q)}\right)w(x,t).
\]
A comparison with \eqref{eq2:qdifofweight} reveals that the spectral data polynomials are
\begin{align*}
W =& a_2a_4(x- a_1t)(x- a_3)x(1-q),\\
2V =& a_2a_4(x- a_1t)(x- a_3) - q^\sigma a_1a_3(x- a_2t)(x- a_4).
\end{align*}
Recalling Theorem \ref{thm:Dqxpn}, it follows that the poles of the linear $q$-difference equation in $x$ satisfied by 
these polynomials is determined by the polynomial
\begin{equation}
W-2x(1-q)V =  q^\sigma a_1a_3(x- a_2t)(x- a_4)x(1-q).
\end{equation}
In the $t$ direction, $w$ satisfies the equation
\[
D_{q,t}w(x,t) =\left( \frac{a_1(x-qa_2t) - a_2(x-qa_1t)}{a_1(x-qa_2t)t(1-q)}\right) w(x,t).
\]
Comparing this expression with \eqref{eq3:weightrationaltderive} shows
\begin{align*}
R(x,t) =& a_1(x-qa_2t)t(1-q),\\
2S(x,t) =& a_1(x-qa_2t) - a_2(x-qa_1t).
\end{align*}
The appropriate poles of the linear $q$-difference equation in $t$ satisfied by these polynomials 
is therefore determined by the polynomial
\begin{equation}
R- 2t(1-q)S = t(1-q)(x-qa_1t).
\end{equation}
We remark these explicit forms for $W,V,R$ and $S$ satisfy \eqref{eq3:compWVRS} as they must. 

\subsection{Linear problem}

Since we have an upper bound for $\deg_x\Theta_n, \deg_x\Omega_n, \deg_x\Phi_n$ and $\deg_x\Psi_n$ from \eqref{eq3:degthetaomega} and \eqref{eq3:degPhiPsi}, we parameterize $\Theta_n$, $\Omega_n$, $\Phi_n$ and $\Psi_n$ by
\begin{subequations}\label{eq5:omthphipsdeg}
\begin{align}
\Theta_n &= \theta_{0,n} + \theta_{1,n} x, \\
\Omega_n &= \omega_{0,n} + \omega_{1,n} x + \omega_{2,n}x^2, \\
\Phi_n &= \phi_{0,n}, \\
\Psi_n &= \psi_{0,n} + \psi_{1,n} x.
\end{align}
\end{subequations}
This bounds the degree of the relevant polynomial component of the linear $q$-difference equations in $x$ and $t$. Hence the linear $q$-difference equations satisfied by the polynomials may be written in the form \eqref{eq4:altdiffx} and \eqref{eq4:altdifft} where
\begin{align}
\tilde{A}_n = I - x(1-q)A_n &= \frac{\tilde{A}_{0,n} + \tilde{A}_{1,n}x + \tilde{A}_{2,n} x^2}{(x- a_2t)(x- a_4)},\\
\tilde{B}_n = I - t(1-q)B_n &= \frac{\tilde{B}_{0,n} + \tilde{B}_{1,n}x}{(x-qa_1t)},
\end{align}
for some set of $\tilde{A}_{i,n}$ and $\tilde{B}_{i,n}$. According to \eqref{eq3:detAntilde} and \eqref{eq3:detBntilde}, the determinants of these matrices are
\begin{align*}
\det \tilde{A}_n &= \frac{a_2a_4(x- a_1t)(x- a_3)}{a_1a_3q^\sigma (x- a_2t)(x- a_4)},\\
\det \tilde{B}_n &= \frac{a_1 a_n \left(x-a_2 q t\right)}{a_2 \left(x-a_1 q t\right) \hat{a}_n}.
\end{align*}

At this point, the associated linear $q$-difference equation satisfied by the orthogonal polynomials is one in which the coefficient matrix, $\tilde{A}_n$, is rational rather than polynomial. 
To relate this formulation to the classical theory of Birkhoff \cite{Birkhoff:General}, or more precisely, Jimbo and Sakai \cite{Sakai:qP6}, we require a gauge transformation that will relate the linear $q$-difference equation in which the coefficient matrix is rational to a linear $q$-difference equation in which the coefficient matrix is polynomial. By considering the associated $q$-difference equation for $Y_n^* = Z_n Y_n$, we note that $Y_n^*$ satisfies the trio of equations
\begin{subequations}\label{eq5:ynstarevol}
\begin{align}
\label{eq5:ynstarevolx}Y_n^*(qx,t) &= \left( \overline{Z}_n(I - x(1-q)A_n) Z_n^{-1} \right) Y_n^* = A_n^* Y_n^*,\\
\label{eq5:ynstarevolt}Y_n^*(x,qt) &= \left( \hat{Z}_n (I - t(1-q)B_n) Z_n^{-1} \right) Y_n^* = B_n^* Y_n^*,\\
\label{eq5:ynstarevoln}Y_{n+1}^* (x,t) &= \left( Z_{n+1} M_n Z_n^{-1} \right) Y_n^* = M_n^* Y_n^*.
\end{align}
\end{subequations}
By letting $Z_n$ to be proportional to appropriate $q$-exponential factors allows 
$A_n^*$ to be polynomial. We may also choose $Z_n$ carefully so that $A_n^*$ possesses some desirable properties, such as certain asymptotic characteristics in $x$ and/or $t$, and doing so makes
the correspondence to the work of Jimbo and Sakai \cite{Sakai:qP6} more apparent. Specifically,
by choosing 
\[
Z_n(x,t) = \frac{e_{q,a_1a_2a_3a_4tq^\sigma}(x)}{\left( \frac{x}{a_2t},\frac{x}{a_4}; q\right)_\infty } \begin{pmatrix} {\displaystyle \frac{a_2a_4q^{-n}}{e_{q,qa_2}(t)\gamma_n}}  & 0 \\ 0 &{\displaystyle \frac{\gamma_{n-1}}{e_{q,qa_1}(t)}} \end{pmatrix},
\]
we have that $Y_n^*$ satisfies the $q$-difference equations 
\begin{subequations}
\begin{align}
\label{eq5:linearAstar} Y_n^*(qx,t) =& (A_{0,n}^* + A_{1,n}^*x + A_{2,n}^* x^2 )Y_n^*  = A_n^* Y_n^*, \\
\label{eq5:linearBstar} Y_n^*(x,qt) =& \frac{x(B_{0,n}^* + B_{1,n}^*x)}{(x-qa_1t)(x-qa_2t)}Y_n^* = B_n^* Y_n^*.
\end{align}
\end{subequations}
The corresponding determinants are given by \eqref{eq3:detAntilde}, \eqref{eq3:detBntilde} and \eqref{eq4:eprops}
\begin{subequations}\label{eq5:determinants}
\begin{align}
\label{eq5:detAnstar}\det(A_n^*) =& a_1a_2a_3a_4q^\sigma(x-a_1t)(x-a_2t)(x-a_3)(x-a_4),\\
\label{eq5:detBnstar}\det(B_n^*) =& \frac{t^2x^2}{ (x-q t a_1)(x-qta_2)}.
\end{align}
\end{subequations}
It will transpire that the form of the coefficient matrices of \eqref{eq5:linearAstar} and \eqref{eq5:linearBstar} is well suited for the purpose of
parameterizing the linear problem satisfied by the orthogonal polynomials.

Although \eqref{eq5:linearAstar} specifies a $2\times 2$ linear $q$-difference system in which the determinant of coefficient matrix, given by \eqref{eq5:detAnstar}, has roots that coincide with those found in \cite{Sakai:qP6}, we require two additional properties; firstly that $A_{2,n}^*$ is a constant diagonal matrix and secondly, that $A_{0,n}^*$ is semisimple with eigenvalues proportional to $t$. 

An asymptotic expansion of $\Omega_n$ and $\Theta_n$ around $x=\infty$ reveals
\begin{align*}
\omega_{2,n} =& \frac{a_2a_4 - q^\sigma(2q^n - 1)a_1a_3}{2}, \\
\theta_{1,n} =& \frac{a_2a_4}{q^{n+1}} - q^{n+\sigma} a_1a_3, \\
\end{align*}
giving
\begin{equation}\label{eq5:A2nstar}
A_{2,n}^* = \begin{pmatrix} \kappa_1 & 0 \\ 0 & \kappa_2  \end{pmatrix},
\end{equation}
where 
\begin{equation*}
\kappa_1 = q^{n+\sigma}a_1 a_3,\qquad
\kappa_2 = a_2a_4q^{-n}.
\end{equation*}
This shows that the linear problem possesses the first required property. 

To show that $A_{0,n}^*$ has eigenvalues that are proportional to $t$, we first write $A_{0,n}^*$ as
\begin{align}\label{eq5:A0nstar1}
A_{0,n}^* = \begin{pmatrix}  {\displaystyle \frac{\kappa_1\kappa_2t+ a_1a_2a_3a_4t}{2}  -\omega_{0,n}}& {\displaystyle \frac{q \kappa_2 w_n\theta_{0,n}}{\kappa_2-q\kappa_1} }\\
{\displaystyle \frac{a_n^2( q\kappa_1 - \kappa_2)\theta_{0,n-1}}{q\kappa_2w_n} } & {\displaystyle \frac{t\kappa_1\kappa_2+ a_1a_2a_3a_4t}{2} - b_{n-1}\theta_{0,n-1} - \omega_{0,n-1}}
\end{pmatrix},
\end{align}
where we have used the notation
\begin{equation}\label{eq5:wndef}
w_n = \frac{ e_{q,qa_1}(t)(\kappa_2- q\kappa_1)}{qe_{q,qa_2}(t)\gamma_n^2}.
\end{equation}
We let $\lambda_{1,n}$ and $\lambda_{2,n}$ be the eigenvalues of $A_{0,n}^*$. Utilizing \eqref{eq5:determinants} and that $\det(A_n^*(0,t)) = \lambda_{1,n} \lambda_{2,n}$ gives us
\[
\lambda_{1,n} \lambda_{2,n} = \kappa_1\kappa_2 a_1a_2a_3a_4t^2,
\]
revealing that $\hat{\lambda}_{1,n} \hat{\lambda}_{2,n} = q^2\lambda_{1,n} \lambda_{2,n}$. Adding $B_{0,n}^*$ to both sides of the residue of \eqref{eq3:compxttilde} at $x = 0$, namely the relation
\[
\hat{A}_{0,n}^* B_{0,n}^* = q B_{0,n}^* A_{0,n}^*,
\]
and then taking determinants shows
\[
\det(I+\hat{A}_{0,n}^*) = \det(I+q A_{0,n}^*),
\]
revealing
\[
1 + \hat{\lambda}_{1,n} + \hat{\lambda}_{2,n} + \hat{\lambda}_{1,n}\hat{\lambda}_{2,n} = 1 + q \lambda_{1,n} + q \lambda_{2,n} + q^2 \lambda_{1,n} \lambda_{2,n}.
\]
This shows that $\lambda_{1,n} \lambda_{2,n} \propto t^2$ and $\lambda_{1,n} + \lambda_{2,n} \propto t$, hence $\lambda_{1,n}$ and $\lambda_{2,n}$ are proportional to $t$. 

A further property of $\lambda_{1,n}$ and $\lambda_{2,n}$, that will be useful later on, is their independence of $n$. The independence of
$\kappa_1\kappa_2$'s on $n$ indicates that $\lambda_{1,n}\lambda_{2,n}$ is independent of $n$. However the trace of $A_{0,n}^*$ is
\[
\kappa_1\kappa_2t\left(1+ \frac{1}{q^{\sigma}}\right) - (\omega_{0,n} + \omega_{0,n-1} + b_{n-1}\theta_{0,n-1}) = \lambda_{1,n} + \lambda_{2,n},
\]
which indicates that the constant coefficient of \eqref{eq3:Fruedx} may be expressed in terms of $\lambda_{1,n}$ and $\lambda_{2,n}$ as
\[
\lambda_{1,n} + \lambda_{2,n} - \lambda_{1,n+1} - \lambda_{2,n+1} = 0.
\]
This proves $\lambda_{1,n} + \lambda_{2,n}$ is independent of $n$, hence $\lambda_{1,n}$ and $\lambda_{2,n}$ are independent of $n$. We may now write 
\begin{align}\label{eq5:eigenvalues}
\{\lambda_{1,n},\lambda_{2,n}\} = \{ \theta_1 t,\theta_2 t \},
\end{align} 
where $\theta_1$ and $\theta_2$ are constant in $t$ and $n$. These eigenvalues are not free, with an implicit dependence on the $a_i$'s and $\kappa_i$'s and the support chosen.

The additional properties mean that \eqref{eq5:linearAstar} can be cast in a form equivalent to the linear problem studied in \cite{Sakai:qP6}. Technical achievements in \cite{Sakai:qP6} are to identify the parameterization of the linear problem which leads to the $q$-$\mathrm{P}_{\mathrm{VI}}$ system. The present orthogonal polynomial setting allows us to perform these steps in a more detailed, and perhaps more systematic manner.

\subsection{Orthogonal polynomial parameterization}

Our pathway toward the parameterization of the problem is to make use of the orthogonal polynomial variables. Parameterizations of this sort can be found in previous works \cite{B03, B94, F08, F98}. However, these works do not provide a systematic way to link up with co-ordinates that specify Painlev\'e systems. 

To begin, using the expansions  \eqref{eq2:defThetaOmega} of $\Omega_n$ and $\Theta_n$, we find 
\begin{subequations}\label{eq5:coefficientthetaomega}
\begin{align}
\omega_{1,n} =& \frac{(1-q) \kappa _1}{q}\sum _{i=0}^{n-1} b_i+\kappa _1 \left(a_2 t+a_4\right)-\frac{\kappa _1 \kappa _2 \left(a_2 t+a_4\right)}{2 a_2 a_4}-\frac{1}{2} a_2
   a_4 \left(a_1 t+a_3\right),\\
\label{eq5:theta0n} \theta_{0,n} =& \frac{\kappa _1 \left(a_2 q t+a_4 q- q \sum _{i=0}^n b_i+\sum _{i=0}^{n-1} b_i\right)}{q} -\frac{\kappa _2 \left(a_1 q t+a_3 q + q \sum _{i=0}^{n-1} b_i-\sum _{i=0}^n
   b_i\right)}{q^2},\\
\label{eq5:omega0n} \omega_{0,n} =&  a_2 t \kappa _1 \left(\sum _{i=0}^{n-1} b_i\right)+  a_4 \kappa _1 \left(\sum _{i=0}^{n-1} b_i\right)- \frac{\kappa _1a_2 t}{q}  \left(\sum _{i=0}^{n-1}
   b_i\right)- \frac{a_4 \kappa _1}{q} \left(\sum _{i=0}^{n-1} b_i\right) \nonumber \\ &  - \kappa _1 \left(\sum _{i=1}^n a_i^2\right)+ \frac{ \kappa _1}{q^2} \sum _{i=1}^{n-1} a_i^2
   - \frac{\kappa_1}{q^2} \sum _{i=0}^{n-2} \sum _{j=i+1}^{n-1} b_i b_j - \frac{\kappa _1}{q^2} \sum _{i=0}^{n-1} \sum _{j=i}^{n-1} b_i b_j \nonumber \\ 
  & + \frac{\kappa_1}{q} \left(\sum _{i=0}^{n-1} b_i\right)^2+
 \frac{\kappa_2}{q} a_n^2- a_2 a_4 t \kappa _1+\frac{a_1 a_2 a_3 a_4 t}{2} +\frac{ t \kappa _1 \kappa _2}{2}.
\end{align}
\end{subequations}
The expansions \eqref{eq2:defPhiPsi} of $\Phi_n$ and $\Psi_n$ give 
\begin{subequations}\label{eq5:phisivals}
\begin{align}
\psi_{1,n} =& \frac{1}{2} \left(a_1+a_2\right) -\frac{a_2  \hat{\gamma}_n}{\gamma_n},\\
\psi_{0,n} =& \frac{a_2 \hat{\gamma}_n  \left(\sum _{i=0}^{n-1} \hat{b}_i-\sum _{i=0}^{n-1} b_i+a_1 q t\right)}{\gamma_n}-a_1 a_2 q t,\\
\phi_{0,n} =& \frac{a_1 \gamma_n^2-a_2 \hat{\gamma}_n^2}{\gamma_n \hat{\gamma}_n }.
\end{align}
\end{subequations}
This specifies a parameterization of the linear problem for $Y_n^*$ in terms of orthogonal polynomial variables. We use the notation
\begin{equation}\label{eq5:Gammadef}
\Gamma_n = \sum_{i=0}^{n-1} b_i,
\end{equation}
which is proportional to the coefficient of $x^{n-1}$ in $p_n$. By combining \eqref{eq5:coefficientthetaomega}, \eqref{eq5:omthphipsdeg}, \eqref{eq4:Aform} and \eqref{eq5:linearAstar}, 
\begin{equation}\label{eq5:A1nstar}
A_{1,n}^* = \left(
\begin{array}{cc}
 {\displaystyle \frac{(q-1) \kappa _1\Gamma_n }{q} -\kappa _1 \left(a_2 t+a_4\right) }& \kappa _2 w_n \\
 {\displaystyle \frac{a_n^2 \left(q \kappa _1-\kappa _2\right) \left(q \kappa _2-\kappa _1\right)}{q^2 \kappa _2 w_n} } & 
 {\displaystyle -\frac{(q-1) \kappa _2\Gamma_n }{q}-\kappa _2 \left(a_1 t+a_3 \right)}
\end{array}
\right).
\end{equation}
We make use of the relations 
\begin{align*}
\mathrm{trace} A_{0,n}^* =& \theta_1t + \theta_2t,\\
\det A_{0,n}^* =& \theta_1 \theta_2t^2,
\end{align*}
which allows \eqref{eq5:A0nstar1} to be simplified to 
\begin{equation}\label{eq5:preA0nstar}
A_{0,n}^* = \left(
\begin{array}{cc}
 {\displaystyle \frac{t\left(a_1 a_2 a_3 a_4 + \kappa _1 \kappa _2\right)}{2}}-\omega_{0,n}  & {\displaystyle -\frac{q \kappa _2 w_n \theta _{0,n}}{q \kappa _1-\kappa_2}}  \\
{ \displaystyle \frac{\left(q \kappa _1-\kappa _2\right) a_n^2 \theta _{0,n-1}}{q \kappa _2 w_n} }& \omega _{0,n}-{\displaystyle \frac{ t \left(a_1 a_2 a_3 a_4-2 \left(\theta_1+\theta_2\right)+\kappa _1 \kappa _2\right)}{2}}
\end{array}
\right),
\end{equation}
where 
\[
\theta_{0,n-1} = \frac{\left(a_1 a_2 a_3 a_4 t-2 t \theta _1+t \kappa _1 \kappa_2 -2 \omega _{0,n} \right) \left(a_1 a_2 a_3 a_4 t-2 t \theta _2+t \kappa _1 \kappa_2-2 \omega _{0,n} \right)}{4 a_n^2 \theta _{0,n}}.
\]
We simplify the expression for $A_{0,n}^*$ by introducing the variable $r_n$ so that we may write $A_{0,n}^*$ as
\begin{equation}\label{eq5:A0nstar}
A_{0,n}^* = \left(
\begin{array}{cc}
 \theta _1t+r_n &  {\displaystyle -\frac{q \kappa _2 w_n \theta _{0,n}}{q \kappa _1-\kappa _2} } \\
 {\displaystyle \frac{\left(q \kappa _1-\kappa _2\right) r_n \left(r_n+  \theta _1t- \theta _2t\right)}{q \kappa _2 w_n \theta_{0,n}} } &  \theta_2 t - r_n
\end{array}
\right).
\end{equation}
In relating the coefficient of $x^2$ in $\det A_n^*$ with \eqref{eq5:detAnstar}, we express $r_n$ as 
\begin{multline*}
r_n = -\frac{(q-1) \kappa _1 \kappa _2 \left(a_1 t-a_2 t+a_3-a_4\right)\Gamma_n }{q \left(\kappa _1-\kappa _2\right)}-\frac{(1-q)^2 \kappa _1 \kappa _2
  \Gamma_n^2}{q^2 \left(\kappa _1-\kappa _2\right)}\\
    -\frac{a_n^2 \left(q \kappa _1-\kappa _2\right) \left(q \kappa _2-\kappa
   _1\right)}{q^2 \left(\kappa _1-\kappa _2\right)}+\frac{t \left(\theta _2 \kappa _1+\theta _1 \kappa
   _2-a_1 a_3 \kappa _2 \kappa _1-a_2 a_4 \kappa _2 \kappa _1\right)}{\kappa _1-\kappa _2}.
\end{multline*}
Equating \eqref{eq5:A0nstar} with \eqref{eq5:preA0nstar} gives an alternate representation of $\omega_{0,n}$ to that of \eqref{eq5:omega0n}. The equations \eqref{eq5:A0nstar}, \eqref{eq5:A1nstar} and \eqref{eq5:A2nstar} are explicit parameterizations of the linear problem using orthogonal polynomial variables combined with knowledge of the structures \eqref{eq5:linearAstar} and \eqref{eq4:Aform}.

We now turn our attention to the parameterization of the linear problem, \eqref{eq5:linearBstar}, involving $B_n^*$. First, upon recalling \eqref{eq4:Bform}, it follows from the large $x$ expansion of $\Phi_n$ and $\Psi_n$, as implied by \eqref{eq5:phisivals}, that
\[
B_{1,n}^* = -t I.
\]
Direct substitution of the values of $\Psi_n$ and $\Phi_n$ from \eqref{eq5:phisivals} gives
\begin{equation}\label{eq5:B0nstar}
B_{0,n}^* =  t\left(
\begin{array}{cc}
  \hat{\Gamma}_n-\Gamma_n+a_1 q t & {\displaystyle \frac{q  \kappa _2 \left(\hat{w}_n-w_n\right)}{q \kappa _1-\kappa _2} } \\
 {\displaystyle \frac{ \left(q \kappa _1-\kappa _2\right) \left(w_n \hat{a}_n^2-a_n^2 \hat{w}_n\right)}{q \kappa _2 w_n \hat{w}_n}} & {\displaystyle \Gamma_n- \hat{\Gamma}_n+ q a_2 t}
\end{array}
\right).
\end{equation}
This gives us enough information to deduce the evolution of the variables $\gamma_n^2$, $a_n^2$ and $\Gamma_n$, which completes the parameterization of the linear problem in terms of the orthogonal polynomial variables. 

Use will be made of \eqref{eq5:A2nstar}, \eqref{eq5:A1nstar}, \eqref{eq5:A0nstar} and \eqref{eq5:B0nstar} as we now proceed to make the correspondence between the above discrete dynamical system and $q$-$\mathrm{P}_{\mathrm{VI}}$ by making a correspondence between the linear systems. 
\subsection{Jimbo-Sakai parameterization}
Our primary task is to find expressions for $w_n$, $y_n$ and $z_n$ in terms of $\gamma_n^2$, $a_n^2$ and $b_n$ and vise versa. We have chosen $w_n$ in the previous parameterization, as it is related to $\gamma_n^2$ via \eqref{eq5:wndef}, to be the variable that reflects the gauge freedom in both parameterizations of the linear problem. In keeping with earlier remarks, we deduce
\begin{equation}\label{eq5:yztheta}
\theta_{0,n}  = \frac{y_n(q\kappa_1 - \kappa_2)}{q},
\end{equation}
and define variables $z_1$ and $z_2$ according to
\begin{equation}\label{eq5:Aynform}
A_n^*(y_n,t) = \begin{pmatrix} \kappa_1 z_{1} & 0 \\ \ast & \kappa_2z_{2} \end{pmatrix}.
\end{equation}
Evaluating the determinant at $x=y_n$ reveals
\begin{equation}\label{eq5:zfactors}
z_{1}z_{2} = (y_n- a_1t)(y_n-a_2t)(y_n-a_3)(y_n-a_4).
\end{equation}
We factorize this into the factors
\begin{subequations}\label{eq3:zvals}
\begin{align}
\label{eq3:zval1}z_{1} =& \frac{(y_n-ta_1)(y_n-ta_2)}{q\kappa_1z_n},\\
\label{eq3:zval2}z_{2} =& (y_n-a_3)(y_n-a_4)q\kappa_1z_n.
\end{align}
\end{subequations}
The benefit of this particular factorization reveals itself in the proof of Theorem \ref{thm:qP6}. It follows from \eqref{eq5:A2nstar}, \eqref{eq5:A0nstar1} and \eqref{eq5:A0nstar} that $z_{1}$ and $z_{2}$ may be expressed in terms of $a_n^2$ and $\Gamma_n$ via the expressions 
\begin{subequations}
\begin{align}
\label{eq4:z1n}\kappa_1 z_{1} =& -\frac{\kappa _1 \theta _{0,n} \left((q-1) \Gamma_n -q \left(a_2 t+a_4\right)\right)}{\kappa _2-q \kappa _1}+\frac{q^2 \kappa _1 \theta_{0,n}^2}{\left(\kappa _2-q \kappa
   _1\right){}^2}+r_n +t \theta _1, \\
\label{eq4:z2n} \kappa_2z_{2} =& \frac{\kappa _2 \theta _{0,n} \left((q-1) \Gamma_n +a_1 q t+a_3 q\right)}{\kappa _2-q \kappa _1}+\frac{q^2 \kappa _2 \theta _{0,n}^2}{\left(\kappa _2-q \kappa
   _1\right)^2}-r_n+t \theta _2,
\end{align}
\end{subequations}
which specifies $z_n$. To be consistent with \eqref{eq5:Aynform}, the matrix in \eqref{eq5:linearAstar} permits the parameterization \cite{Sakai:qP6}
\begin{equation}\label{eq5:AS}
A_n^* = \begin{pmatrix} 
\kappa_1((x-y_n)(x-\alpha) + z_{1}) & \kappa_2 w_n (x-y_n) \\
{\displaystyle \frac{\kappa_1 (\gamma x + \delta)}{w_n}} & \kappa_2 ((x-y_n)(x-\beta)+z_{2})
\end{pmatrix},
\end{equation}
where $\alpha$, $\beta$, $\gamma$ and $\delta$ are to be determined. Comparing the upper left entry of \eqref{eq5:AS} with \eqref{eq5:A2nstar}, \eqref{eq5:A1nstar} and \eqref{eq5:A0nstar} shows
\begin{align}
\label{eq4:sumbnalpha} \Gamma_n =&\frac{q \left(a_2 t+a_4-y_n-\alpha \right)}{q-1},\\
r_n =& \kappa _1 y_n \alpha +\kappa _1 z_{1}-t \theta _1\nonumber.
\end{align}
These substituted into \eqref{eq4:z2n} reveal
\begin{equation}\label{eq3:alphaval}
\alpha = \frac{1}{\kappa_1- \kappa_2} \left(\frac{1}{y_n} (( \theta_1 + \theta_2)t - \kappa_1 z_{1} - \kappa_2z_{2}) - \kappa_2((a_1+a_2)t + a_3 + a_4- 2y_n)  \right).
\end{equation}
Conversely, comparing coefficients of the lower-left entry of \eqref{eq5:AS} with \eqref{eq5:A2nstar}, \eqref{eq5:A1nstar} and \eqref{eq5:A0nstar} gives
\begin{align}
\label{eq4:sumbnbeta} \Gamma_n =&-\frac{q \left(a_1 t+a_3-y_n-\beta\right)}{q-1},\\
r_n =& -\kappa _2 y_n \beta -\kappa _2 z_{2}+t \theta _2\nonumber.
\end{align}
These substituted into \eqref{eq4:z1n} show
\begin{equation}\label{eq3:betaval} 
\beta = \frac{1}{\kappa_1- \kappa_2} \left(- \frac{1}{y_n} (( \theta_1 + \theta_2)t - \kappa_1 z_{1} - \kappa_2z_{2}) + \kappa_1((a_1+a_2)t + a_3 + a_4- 2y_n)  \right).
\end{equation}
The strategy to be used to specify $\gamma$ and $\delta$ makes use of \eqref{eq5:detAnstar}. By equating the coefficient of $x^2$ of $\det A_n^*$ from \eqref{eq5:AS} with \eqref{eq5:detAnstar}, we have
\begin{equation}\label{eq3:gammaval}
\gamma = z_{1} + z_{2} + (y_n+\alpha)(y_n+\beta) + (\alpha+\beta)y_n - a_1a_2t^2 - (a_1+a_2)(a_3+a_4)t-a_3a_4.
\end{equation}
A comparison between the coefficient of $x$ in \eqref{eq5:AS} with that of \eqref{eq5:detAnstar} shows
\begin{equation}\label{eq3:deltaval}
\delta = \frac{1}{y_n}\left(a_1a_2a_3a_4t^2 - (\alpha y_n+z_{1})(\beta y_n + z_{2})\right).
\end{equation}
This concludes our task of parameterizing the linear problem associated with the orthogonal polynomial system with the weight \eqref{eq5:weight} and its correspondence with the parameterization of \cite{Sakai:qP6}. 
After completing the task of parameterization of $\alpha$, $\beta$, $\gamma$ and $\delta$, Jimbo and Sakai proceeded to give the coupled equations referred to as $q$-$\mathrm{P}_{\mathrm{VI}}$. However few details were given there. We shall provide details by making use of a structured form of the $B$ matrix following from the orthogonal polynomial viewpoint. The structured form of the $B$ matrix follows by 
using the substitutions of \eqref{eq4:sumbnalpha} and \eqref{eq4:sumbnbeta} in \eqref{eq5:B0nstar}, giving 
\begin{equation}\label{eq4:B0matrix}
B_{0,n}^* = \begin{pmatrix} {\displaystyle qt^2 \left( a_1 + a_2 - D_{q,t}(y_n + \alpha) \right)} & -{\displaystyle \frac{qt\kappa_2(w_n - \hat{w}_n)}{q\kappa_1 - \kappa_2} } \\
{\displaystyle \frac{qt\kappa_1(\hat{w}_n\gamma - w_n \hat{\gamma})}{(\kappa_1-q\kappa_2)w_n\hat{w}_n}  } & {\displaystyle qt^2\left( a_1 + a_2 - D_{q,t}(y_n + \beta) \right)} \end{pmatrix}.
\end{equation}
In addition to (\ref{eq4:B0matrix}) a crucial ingredient in our derivation of $q$-$\mathrm{P}_{\mathrm{VI}}$ 
are the compatibility conditions \eqref{eq3:compxnttilde}. 
After making the transformation $Y_n^{*} = Z_n Y_n$ these latter conditions read 
\begin{subequations}\label{eq5:compxtn}
\begin{gather}
\label{eq5:compxt}A_n^*(x,qt) B_n^*(x,t) = B_n^*(q x,t) A_n^*(x,t),\\
\label{eq5:compxn}A_{n+1}^*(x,t) M_n^*(x,t) = M_n^*(q x,t) A_n^*(x,t),\\
\label{eq5:comptn}B_{n+1}^*(x,t) M_n^*(x,t) = M_n^*(x,q t) B_n^*(x,t).
\end{gather}
\end{subequations}
By evaluating the residue of \eqref{eq5:compxt} at $x = a_1t, a_2t, qa_1t,qa_2t$, we obtain the expressions
\begin{subequations}\label{eq5:CompRes1}
\begin{align}
\label{Res1}(qa_1t B_{1,n}^* + B_{0,n}^*)A_n^*(a_1t,t) =& 0, \\
\label{Res2}(qa_2t B_{1,n}^* + B_{0,n}^*)A_n^*(a_2t,t) =& 0, 
\end{align}
%\vspace{-1.8cm}
\end{subequations}
\begin{subequations}\label{eq5:CompRes2}
\begin{align}
\label{Res3} A_n^*(qa_1t,qt)(qa_1t B_{1,n}^* + B_{0,n}^*) =& 0, \\
\label{Res4} A_n^*(qa_2t,qt)(qa_2t B_{1,n}^* + B_{0,n}^*) =& 0 .
\end{align}
\end{subequations}
By looking at the residue of \eqref{eq5:compxt} at $x = 0$, we obtain the additional relation
\begin{equation}\label{eq5:CompRes0}
\hat{A}_{0,n}^* B_{0,n}^* = q B_{0,n}^* A_{0,n}^* .
\end{equation}
\begin{theorem}[\cite{Sakai:qP6}]\label{thm:qP6}
The compatibility condition, \eqref{eq5:compxt}, is equivalent to the evolution equations for $y_n$, $z_n$ and $w_n$ specified by
\begin{subequations}\label{eq5:qP6theta}
\begin{align}
\label{eq5:what}\hat{w}_n &= w_n \frac{(q\kappa_1\hat{z}_n-1)}{\kappa_2\hat{z}_n-1},\\
\label{eq5:zhat}\hat{z}_n z_n &= \frac{(y_n - a_1t)(y_n-ta_2)}{q\kappa_1\kappa_2(y_n-a_3)(y_n-a_4)},\\
\label{eq5:yhat}\hat{y}_n y_n &= \frac{q(\theta_1\hat{z}_n - ta_1a_2)(\theta_2\hat{z}_n - ta_1 a_2)}{a_1a_2(q\kappa_1\hat{z}_n - 1)(\kappa_2 \hat{z}_n - 1)}.
\end{align}
\end{subequations}
\end{theorem}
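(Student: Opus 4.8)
The plan is to show that the matrix identity \eqref{eq5:compxt}, an equality of $2\times 2$ matrices rational in $x$, is equivalent to a finite list of scalar relations obtained by matching poles and residues, and then to unwind those relations into the three evolution equations \eqref{eq5:what}--\eqref{eq5:yhat}. First I would record the pole structure of both sides. Since $A_n^*(x,t)$ is polynomial in $x$ and $B_n^*(x,t)$, by \eqref{eq5:linearBstar}, has simple poles only at $x=qa_1t$ and $x=qa_2t$, the left-hand side $A_n^*(x,qt)B_n^*(x,t)$ has poles only at $x=qa_1t,qa_2t$, while the right-hand side $B_n^*(qx,t)A_n^*(x,t)$ has poles only at $x=a_1t,a_2t$. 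Equality therefore forces each of the four principal parts to vanish, and computing the residues produces exactly \eqref{Res1}--\eqref{Res4}; comparing the coefficients of $x$ in the Taylor expansions about $x=0$ (both sides vanish at $x=0$ because of the explicit factor of $x$ in the numerator of $B_n^*$) produces \eqref{eq5:CompRes0}. Conversely, once \eqref{Res1}--\eqref{Res4} hold both sides are polynomials of degree $2$ whose leading coefficients already agree, being $-t\,\diag(\kappa_1,\kappa_2)$, so the difference has degree $\leq 1$; it vanishes at $x=0$ automatically, and \eqref{eq5:CompRes0} kills its linear term, forcing the difference to be identically zero. Thus \eqref{eq5:compxt} is equivalent to \eqref{Res1}--\eqref{Res4} together with \eqref{eq5:CompRes0}, and it suffices to extract the three evolution equations from these.

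Second, I would exploit the rank-one structure that \eqref{eq5:detAnstar} forces on $A_n^*$. Because $\det A_n^*(x,t)$ vanishes at $x=a_1t,a_2t,a_3,a_4$, each matrix $A_n^*(a_it,t)$ has rank one. Using $B_{1,n}^*=-tI$, relation \eqref{Res1} reads $(B_{0,n}^*-qa_1t^2 I)A_n^*(a_1t,t)=0$, which says that the column space of $A_n^*(a_1t,t)$ is an eigenspace of $B_{0,n}^*$ with eigenvalue $qa_1t^2$; likewise \eqref{Res2} identifies the column space of $A_n^*(a_2t,t)$ as the eigenspace for $qa_2t^2$. This pins down $B_{0,n}^*$ in terms of the Jimbo--Sakai data \eqref{eq5:AS}. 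The evolution then emerges from the hatted relations: since $B_{0,n}^*-qa_1t^2 I$ is rank one with image the $qa_2t^2$-eigenline, i.e.\ the column space of $A_n^*(a_2t,t)$, relation \eqref{Res3} places that column space inside $\ker\widehat{A}_n^*(qa_1t)$, and \eqref{Res4} symmetrically places $\operatorname{col}A_n^*(a_1t,t)$ inside $\ker\widehat{A}_n^*(qa_2t)$. Reading these kernel conditions off the explicit entries of \eqref{eq5:AS}, and substituting the factorization \eqref{eq3:zvals}, which is engineered so that $(y_n-a_1t)(y_n-a_2t)$ and $(y_n-a_3)(y_n-a_4)$ separate into the two diagonal factors $\kappa_1z_1$ and $\kappa_2z_2$ of $A_n^*(y_n,t)$ from \eqref{eq5:Aynform}, should collapse the cross-point relation to the product formula \eqref{eq5:zhat}.

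Third, I would obtain the remaining two equations. For $\widehat{w}_n$ I would compare the off-diagonal entries of \eqref{eq5:CompRes0}, namely $\widehat{A}_{0,n}^*B_{0,n}^*=qB_{0,n}^*A_{0,n}^*$, using the explicit forms \eqref{eq5:A0nstar} and \eqref{eq4:B0matrix}: the gauge variable $w_n$ enters these entries linearly, and the surviving ratio is precisely $(q\kappa_1\widehat{z}_n-1)/(\kappa_2\widehat{z}_n-1)$, giving \eqref{eq5:what}. For the product $\widehat{y}_n y_n$ I would evaluate the residue relations at the off-diagonal zeros $x=y_n$ and $x=\widehat{y}_n$, where $A_n^*(y_n,t)$ and $\widehat{A}_n^*(\widehat{y}_n,t)$ are lower triangular with diagonal entries $(\kappa_1z_1,\kappa_2z_2)$ and their hatted counterparts by \eqref{eq5:Aynform}; eliminating $z_1,z_2$ through \eqref{eq3:zvals} and \eqref{eq5:zfactors} and inserting the eigenvalue normalisation \eqref{eq5:eigenvalues}, $\{\lambda_{1,n},\lambda_{2,n}\}=\{\theta_1 t,\theta_2 t\}$, should assemble the factors $(\theta_1\widehat{z}_n-ta_1a_2)$ and $(\theta_2\widehat{z}_n-ta_1a_2)$ and reproduce \eqref{eq5:yhat}.

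The converse is then immediate, since every reduction above is an equivalence: matching principal parts and the linear term at $x=0$ is reversible by the degree count of the first paragraph, and each of \eqref{eq5:what}--\eqref{eq5:yhat} is extracted from the residue relations by manipulations that run backwards once the parameterization \eqref{eq5:AS} is fixed. I expect the genuine difficulty to be purely computational: carrying the rank-one kernel and cokernel directions of $A_n^*(a_it,t)$ through \eqref{Res1}--\eqref{Res4} without sign or $q$-power slips, and then recognising the resulting rational expressions in $y_n,z_n,w_n$ as the compact forms \eqref{eq5:what}--\eqref{eq5:yhat}. The choice of factorization \eqref{eq3:zvals} is the device that controls this bookkeeping, and checking that it does so cleanly at both the unshifted points $a_1t,a_2t$ and the shifted points $qa_1t,qa_2t$ is the crux of the argument.
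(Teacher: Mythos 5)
Your opening reduction is correct and in fact more explicit than the paper's: matching the principal parts at $x=qa_1t,qa_2t$ and $x=a_1t,a_2t$ gives exactly \eqref{Res3}--\eqref{Res4} and \eqref{Res1}--\eqref{Res2}, the leading coefficients of the resulting degree-two polynomials agree (both are $-t\,\mathrm{diag}(\kappa_1,\kappa_2)$, since $A_{2,n}^*$ is constant diagonal and $B_{1,n}^*=-tI$), and the linear term at $x=0$ is precisely \eqref{eq5:CompRes0}. Your rank-one observations are also sound: the eigenvalues of $B_{0,n}^*$ are $qa_1t^2,qa_2t^2$ by \eqref{eq5:detBnstar}, so \eqref{Res1}--\eqref{Res2} identify the eigenlines with the column spaces of $A_n^*(a_it,t)$, and \eqref{Res3}--\eqref{Res4} place those lines in the kernels of the hatted matrices. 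Up to here you are on the paper's track.

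The gaps are in how you extract the three scalar equations. First, the two cross-point kernel conditions cannot ``collapse'' to \eqref{eq5:zhat} on their own: each is a single proportionality involving $w_n,\hat w_n,y_n,z_n,\hat y_n,\hat z_n$, and eliminating the gauge ratio $\hat w_n/w_n$ between them leaves one mixed relation in $(y_n,z_n,\hat y_n,\hat z_n)$, not the $\hat y_n$-free product formula. The disentangling device in the paper is the a priori structured form \eqref{eq4:B0matrix}: solving the upper-right entries of \eqref{Res3}--\eqref{Res4} for $b_{12}$ in hatted data and equating with the $(w_n-\hat w_n)$ entry of \eqref{eq4:B0matrix} yields \eqref{eq5:what} \emph{first}; only then does the analogous computation at the unhatted points, via the factorization \eqref{eq3:zvals}, reduce to \eqref{eq5:zhat}. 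Second, your proposed route to \eqref{eq5:what} through the off-diagonal entries of \eqref{eq5:CompRes0} is unsubstantiated: those entries couple $\hat r_n$ and the diagonal entries $b_{11},b_{22}$, which by \eqref{eq4:B0matrix} contain the unknowns $D_{q,t}(y_n+\alpha)$ and $D_{q,t}(y_n+\beta)$, so no clean ratio ``survives''; in the paper \eqref{eq5:CompRes0} plays no role in this proof at all (it was used earlier only to establish \eqref{eq5:eigenvalues}). Third, your sketch for \eqref{eq5:yhat} does not parse as stated --- \eqref{Res1}--\eqref{Res4} are matrix identities at fixed points, so there is nothing to ``evaluate at $x=y_n$'' --- whereas the paper obtains it by matching the $b_{11}$ solved from \eqref{Res1}--\eqref{Res2} against $qt^2\left(a_1+a_2-D_{q,t}(y_n+\alpha)\right)$ and substituting \eqref{eq3:alphaval}; moreover this step has a branch: the resulting relation is equivalent to \eqref{eq5:yhat} \emph{or} to the Riccati solutions $\hat y_n=qy_n(1-\kappa_2\hat z_n)/(1-q\kappa_2\hat z_n)$, and one must argue the latter fails generically --- a subtlety your proposal omits.
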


\begin{proof}
For brevity, we let the parameterization of $B_{0,n}^*$ of \eqref{eq4:B0matrix} be given by $B_{0,n}^* = (b_{ij})_{i,j=1,2}$. The upper right entries of compatibility condition \eqref{Res3} and \eqref{Res4} read
\begin{align*}
\kappa_2\hat{w}_n(\hat{y}_n- qta_1)(b_{22} - qa_1t^2) &= \kappa_1b_{12} ((qta_1-\hat{y}_n)(qta_1-\hat{\alpha})+\hat{z}_1),\\
\kappa_2\hat{w}_n(\hat{y}_n- qta_2)(b_{22} - qa_2t^2) &= \kappa_1b_{12} ((qta_2-\hat{y}_n)(qta_2-\hat{\alpha})+\hat{z}_1),
\end{align*}
which gives us an expression for $b_{12}$ and $b_{22}$. We only require $b_{12}$, which is given by
\begin{equation*}
b_{12} = \frac{t\kappa_2 \hat{w}_n (\hat{y}_n - qa_1t)(\hat{y}_n - qa_2t)}{\kappa_1((\hat{y}_n - qa_1t)(\hat{y}_n - qa_2t)- \hat{z}_{1})}.
\end{equation*}
Equating this with the upper right element of \eqref{eq4:B0matrix} gives
\[
-\frac{q(\hat{w}_n-w_n)}{q\kappa_1 - \kappa_2} = \frac{\hat{w}_n (\hat{y}_n - qa_1t)(\hat{y}_n - qa_2t)}{\kappa_1(\hat{z}_{1} - (\hat{y}_n - qa_1t)(\hat{y}_n - qa_2t))}.
\]
This evolution equation is simplified using the particular factorization \eqref{eq5:zfactors}. The structure of the right hand side of the above relation justifies, {\it a posteri}, the factorization \eqref{eq5:zfactors}. The particular form of \eqref{eq3:zvals} means the evolution  of $w_n$ is equivalent to \eqref{eq5:what}.

The upper right entries of compatibility condition \eqref{Res1} and \eqref{Res2} read
\begin{align*}
\kappa_2w_n(y_n-a_1t)(b_{11} - qa_1t^2) &= \kappa_2b_{12} ((ta_1-y_n)(ta_1-\beta)+z_2),\\
\kappa_2w_n(y_n-a_2t)(b_{11} - qa_2t^2) &= \kappa_2b_{12} ((ta_2-y_n)(ta_2-\beta)+z_2),
\end{align*}
which we solve in terms of $b_{11}$ and $b_{12}$ to give
\begin{align}
b_{12} =& \frac{qtw_n(y_n-a_1t)(y_n-a_2t)}{(y_n-a_1t)(y_n-a_2t)-z_2},\\
\label{b11} b_{11} =& \frac{q t \left(z_2 \left(y_n-\left(a_1+a_2\right) t\right)+\beta  \left(a_1 t-y_n\right) \left(a_2 t-y_n\right)\right)}{\left(a_1 t-y_n\right) \left(a_2 t-y_n\right)-z_2}.
\end{align}
We deduce
\[
\frac{\kappa_2(\hat{w}_n-w_n)}{q\kappa_1 - \kappa_2} = \frac{w_n(y_n-a_1t)(y_n-a_2t)}{(y_n-a_1t)(y_n-a_2t)-z_2},
\]
which is equivalent to \eqref{eq5:zhat} knowing \eqref{eq5:what}. Comparing \eqref{b11} with \eqref{eq4:B0matrix} yields 
\[
t \left( a_1 + a_2 + D_{q,t}(y_n + \alpha) \right) = \frac{z_2 \left(y_n-\left(a_1+a_2\right) t\right)+\beta  \left(a_1 t-y_n\right) \left(a_2 t-y_n\right)}{\left(a_1 t-y_n\right) \left(a_2 t-y_n\right)-z_2},
\]
which is equivalent to \eqref{eq5:yhat} knowing \eqref{eq5:what} and \eqref{eq5:zhat}, or the particular Riccati solutions
\[
\hat{y}_n = \frac{qy_n(1-\kappa_2\hat{z}_n)}{1-q\kappa_2\hat{z}_n},
\]
the latter not being satisfied in general. The derivation of the evolution equations is complete.
\end{proof}
Full correspondence with the Jimbo and Sakai form is obtained by letting
\begin{equation}
a_5 = \frac{a_1a_2}{\theta_1}, \qquad a_6 = \frac{a_1a_2}{\theta_2}, \qquad a_7 = \frac{1}{q\kappa_1}, \qquad a_8 = \frac{1}{\kappa_2},
\end{equation}
where \eqref{eq5:qP6theta} become
\begin{align*}
\hat{z}_n z_n =& \frac{a_7a_8(y_n - a_1t)(y_n-ta_2)}{(y_n-a_3)(y_n-a_4)},\\
\hat{y}_n y_n =& \frac{a_3a_4(\hat{z}_n - a_5t)(\hat{z}_n - a_6t)}{(\hat{z}_n-a_7)(\hat{z}_n - a_8)},
\end{align*}
under conditions that
\[
\frac{a_5a_6}{a_7a_8} = \frac{qa_1a_2}{a_3a_4},
\]
as given in \cite{Sakai:qP6}.

We now return to the orthogonal polynomial context for these results. In addition to \eqref{eq5:ynstarevolx} the three term recursion relation, \eqref{eq1:linearn}, in the orthogonal polynomial context gives us another linear problem. The representation of $M_n^*$ following from \eqref{eq1:linearn} and \eqref{eq5:ynstarevoln} is
\begin{equation}\label{eq5:Mnstar}
M_n^* = \left(
\begin{array}{cc}
{\displaystyle \frac{x-b_n}{q}} & {\displaystyle \frac{\kappa _2 w_n}{q \kappa _1-\kappa _2} }\\
 {\displaystyle \frac{\kappa _2-q \kappa _1}{q \kappa _2 w_n}}  & 0
\end{array}
\right).
\end{equation}
This can be used to express the orthogonal polynomial quantity $b_n$ in terms of the natural variables. 
Considering the coefficient of $x^2$ and $x$ in the upper left and right entries of \eqref{eq5:compxn} respectively results in the expression
\begin{equation}\label{eq5:bn}
b_n = \frac{q \left(q \kappa _1 \alpha-\kappa _2 \beta\right)}{q^2 \kappa _1-\kappa _2}.
\end{equation}
For the orthogonal polynomial quantity $a_n^2$ a comparison of the lower left component of $A_{1,n}^*$ given by \eqref{eq5:A1nstar} and \eqref{eq5:AS} shows 
\begin{equation}\label{eq3:antrans}
a_n^2 = \frac{q^2\kappa_1\kappa_2 \gamma}{(q\kappa_1-\kappa_2)(q\kappa_2-\kappa_1)}.
\end{equation}

One important consequence from this perspective is that the natural variables may be expressed in 
terms of determinants of the moments. Using \eqref{eq5:wndef} and \eqref{eq2:detgamma} we have
\begin{equation}\label{eq5:detwn}
w_n = \frac{ e_{q,qa_1}(t)(\kappa_2- q\kappa_1)\Delta_{n+1}}{qe_{q,qa_2}(t)\Delta_n}.
\end{equation}
Using \eqref{eq5:theta0n} and \eqref{eq5:yztheta} gives
\begin{equation}\label{eq5:detyn}
y_n = \frac{q\kappa_1 (a_2t+a_4) - \kappa_2(a_1t+a_3)}{q\kappa_1 - \kappa_2} + \frac{\kappa_1-\kappa_2}{q\kappa_1-\kappa_2} \frac{\Sigma_n}{\Delta_n} - \frac{q\kappa_1 - \frac{\kappa_2}{q}}{q\kappa_1 - \kappa_2} \frac{\Sigma_{n+1}}{\Delta_{n+1}}.
\end{equation}
The simplest determinantal form for $z_n$ comes from the substitution of \eqref{eq5:detwn} into the inversion of \eqref{eq5:what}, which reveals
\begin{equation}\label{eq5:detzn}
z_n = \frac{a_1 \Delta_{n+1} \underhat{\Delta}{}_n - a_2 
\Delta_n \underhat{\Delta}{}_{n+1}}{a_1 \kappa _2 \Delta_{n+1} \underhat{\Delta}{}_n -q a_2 \kappa _1 \Delta_n \underhat{\Delta}{}_{n+1}}.
\end{equation}
These may correspond to known determinantal solutions, such as the
Casorati determinants of Sakai \cite{Sakai:DetSolsqP6}, although we are yet to investigate this 
point.

\subsection{B\"acklund transformations}

The linear problem equivalent to the orthogonal polynomials three term recursion, \eqref{eq5:ynstarevoln}, may be expressed in terms of the natural variables appearing in \eqref{eq5:bn}. Substitution of \eqref{eq5:bn} into \eqref{eq5:Mnstar} gives
\[
M_n^* =  \left(
\begin{array}{cc}
{\displaystyle \frac{q^2 \kappa _1 (x-\alpha)+\kappa _2 (q \beta-x)}{q^3 \kappa _1-q \kappa _2}} & {\displaystyle \frac{\kappa _2 w_n}{q \kappa _1-\kappa _2} } \\
 {\displaystyle \frac{\kappa _2-q \kappa _1}{q \kappa _2 w_n}}  & 0
\end{array}
\right).
\]
In the context of orthogonal polynomial theory the system of equations describing the evolution of this system in the $n$ direction are known the Laguerre-Freud equations. Moreover, these very recurrence relations in the transformation $n \to n-1$ and $n \to n+1$ represent elements in the group of B\"acklund transformations. Since the group of B\"acklund transformations are of affine Weyl type, the Laguerre-Freud equations are equivalent to a translational component of the extended affine Weyl group of type $D_5^{(1)}$. 
We represent the $n \to n-1$ translation as 
\begin{equation}\label{5.41a}
\left\{ \begin{array}{c c c c} a_1 & a_2 & a_3 & a_4 \\
a_5 & a_6 & a_7 & a_8 \end{array} : y_n \,\, z_n\right\} \to \left\{ \begin{array}{c c c c} a_1 & a_2 & a_3 & a_4 \\
a_5 & a_6 & qa_7 &  \frac{a_8}{q} \end{array} : y_{n-1} \,\, z_{n-1}\right\}.
\end{equation}
The derivation of its explicit form relies on \eqref{eq5:compxn}. The lower right entry of \eqref{eq5:compxn} shifted $n \to n-1$ at $x = y_{n-1}$ yields the relation
\begin{equation}\label{eq4:yn-1}
y_{n-1} = -\frac{\delta}{\gamma}.
\end{equation}
By evaluating the upper right entry of \eqref{eq5:compxn} shifted by $n \to n-1$ at $x = y_{n-1}$ we obtain 
\begin{equation}
z_{n-1}=-\frac{(y_{n-1}-y_n) (y_{n-1}-\alpha)+z_{1}}{q \kappa _2 \left(a_4-y_{n-1}\right) \left(y_{n-1}-a_3\right)}.
\end{equation}
Finally, using \eqref{eq5:wndef} to find $w_{n-1}/w_n$ reveals
\[
w_{n-1} = \frac{w_n(\kappa_1- q\kappa_2)}{a_n^2(q\kappa_1 - \kappa_2)},
\]
which expresses $y_{n-1}$, $z_{n-1}$ and $w_{n-1}$ in terms of $y_{n}$, $z_n$ and $w_n$. 

A more canonical transformation from the orthogonal polynomial perspective is the transformation corresponding to the
shift $n \to n+1$, which is represented by
\[
\left\{ \begin{array}{c c c c} a_1 & a_2 & a_3 & a_4 \\
a_5 & a_6 & a_7 & a_8 \end{array} : y_n \,\, z_n\right\} \to \left\{ \begin{array}{c c c c} a_1 & a_2 & a_3 & a_4 \\
a_5 & a_6 & \frac{a_7}{q} &  q a_8 \end{array} : y_{n+1} \,\, z_{n+1}\right\} .
\]
Another viewpoint is that this shift is a $q$-difference analogue of a Schlesinger transformation of the linear system, which induces a B\"acklund transformation of the Painlev\'e equation\cite{JimboMiwa2}. The Schlesinger transformation is induced by multiplication on the left by a rational matrix, this rational matrix coincides with $M_n(x)$ for this particular solution of the linear system. 

\begin{theorem}
The shift $(y_n, z_n) \to (y_{n+1},z_{n+1})$ is given by

\begin{equation}
\label{znp1} z_{n+1} = \frac{\kappa _2 z_n \left[y_n(a_1 t -y_n)+\zeta_n\right] \left[y_n(a_2 t-y_n)+\zeta_n\right]}{q^2 \kappa _1 \left[\kappa _2 y_n z_n \left(a_3-y_n\right)+\zeta_n\right] \left[\kappa _2 y_n z_n \left(a_4-y_n\right)+\zeta_n\right]},
\end{equation}
\begin{align}
\label{ynp1} y_{n+1}&= \frac{\kappa _2 y_n \left(1-\kappa _2 z_n\right)}{q^2 \kappa _1 \left(1-q^2 \kappa
   _1 z_{n+1}\right)} \\ \times &\left[ \frac{ \zeta_n -(y_n-a_1t)(y_n-a_2t)+ {\displaystyle \frac{  z_n(q\theta_1t-\kappa_2a_1a_2t^2)}{1-\kappa_2z_n}}}{\kappa _2 y_n z_n \left(a_3-y_n\right)+\zeta _n }\right]  
 \left[ \frac{ \zeta_n -(y_n-a_1t)(y_n-a_2t)+ {\displaystyle \frac{z_n(q\theta_2t-\kappa_2a_1a_2t^2)}{1-\kappa_2z_n}}}{\kappa _2 y_n z_n \left(a_4-y_n\right)+\zeta _n} \right],\nonumber
\end{align}
where
\begin{multline*}
(\kappa_2 - q^2\kappa_1)\zeta_n = \kappa_2(y_n-a_1t)(y_n-a_2t)- q^2 \kappa_1\kappa_2(y_n-a_3)(y_n-a_4)z_n\\ + \frac{\kappa_2z_n}{1-\kappa_2z_n} \frac{(t\theta_1 - q\kappa_1a_3a_4)(t\theta_2 - q\kappa_1a_3a_4)}{\kappa_1a_3a_4} .
\end{multline*}
\end{theorem}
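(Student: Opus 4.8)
The plan is to exploit the compatibility \eqref{eq5:compxn}, $A_{n+1}^*(x)M_n^*(x) = M_n^*(qx)A_n^*(x)$, which is exactly the assertion that the shift $n\to n+1$ acts on the $x$-linear problem \eqref{eq5:linearAstar} as the $q$-Schlesinger transformation induced by left multiplication by $M_n^*$. Since one computes directly from \eqref{eq5:Mnstar} that $\det M_n^* = 1/q$ is a nonzero constant, $M_n^*(x)$ is invertible and \eqref{eq5:compxn} may be recast as the conjugation
\[
A_{n+1}^*(x) = M_n^*(qx)\,A_n^*(x)\,M_n^*(x)^{-1}.
\]
Both factors on the right are completely explicit: $A_n^*$ is given by the Jimbo--Sakai form \eqref{eq5:AS} with $z_{1}$, $z_{2}$ from \eqref{eq3:zvals} and $\alpha,\beta,\gamma,\delta$ from \eqref{eq3:alphaval}--\eqref{eq3:deltaval}, while $M_n^*$ is \eqref{eq5:Mnstar} with $b_n$ eliminated via \eqref{eq5:bn}. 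Thus the right-hand side is a rational expression in $y_n$, $z_n$, $w_n$, $t$ and the fixed parameters, and evaluating it reproduces $A_{n+1}^*(x)$ in the same parameterized shape.

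First I would read off $y_{n+1}$. By \eqref{eq5:AS}--\eqref{eq5:Aynform} the $(1,2)$ entry of $A_{n+1}^*$ equals $\kappa_2 w_{n+1}(x-y_{n+1})$, so $y_{n+1}$ is the unique zero of $\bigl(M_n^*(qx)A_n^*(x)M_n^*(x)^{-1}\bigr)_{12}$; the parasitic $x^{3}$ and $x^{2}$ terms are guaranteed to cancel by \eqref{eq5:compxn} together with the established polynomial structure \eqref{eq5:linearAstar}. A useful companion computation, exactly parallel to the derivation of \eqref{eq4:yn-1} in the $n\to n-1$ case, is to evaluate \eqref{eq5:compxn} at $x=y_n$, where $A_n^*(y_n)$ is lower triangular by \eqref{eq5:Aynform}: comparing the four entries shows that $A_{n+1}^*(y_n)$ is \emph{upper} triangular, with the diagonal swapped,
\[
(A_{n+1}^*)_{11}(y_n)=\kappa_2 z_{2},\qquad (A_{n+1}^*)_{22}(y_n)=\kappa_1 z_{1},\qquad (A_{n+1}^*)_{21}(y_n)=0,
\]
the last being the expected relation $y_n=-\delta_{n+1}/\gamma_{n+1}$. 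These identities both validate the computation and fix $y_{n+1}$ implicitly.

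Next I would obtain $z_{n+1}$. Evaluating the diagonal of $A_{n+1}^*$ at $x=y_{n+1}$ gives $(A_{n+1}^*)_{11}(y_{n+1})=\kappa_1 z_{1}^{(n+1)}$ and $(A_{n+1}^*)_{22}(y_{n+1})=\kappa_2 z_{2}^{(n+1)}$, and the index-$(n+1)$ version of the factorization \eqref{eq3:zval1}, namely $\kappa_1 z_{1}^{(n+1)}=(y_{n+1}-ta_1)(y_{n+1}-ta_2)/(q z_{n+1})$, then solves for $z_{n+1}$. The gauge factor $w_{n+1}$ enters neither answer, since the location of the $(1,2)$ zero and the diagonal values are insensitive to an overall rescaling of $A_{n+1}^*$; this is precisely why \eqref{znp1}--\eqref{ynp1} are expressed in $y_n$ and $z_n$ alone.

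The main obstacle is purely algebraic. The raw conjugation delivers $y_{n+1}$ and $z_{n+1}$ as ungainly rational functions of $y_n$, $z_n$, and the explicit $\alpha,\beta,\gamma,\delta,z_1,z_2$; the content of the theorem is that after substituting \eqref{eq3:zvals} and \eqref{eq3:alphaval}--\eqref{eq3:deltaval} these collapse into the compact factored forms \eqref{znp1} and \eqref{ynp1}. The decisive device is the auxiliary quantity $\zeta_n$: the hard part will be to recognize that the recurring subexpression in the numerators and denominators is exactly the $\zeta_n$ of the statement, and to verify that with this substitution each bracket factors as a single linear-in-$y_n$ quantity of the form $y_n(a_it-y_n)+\zeta_n$ or $\kappa_2 y_n z_n(a_j-y_n)+\zeta_n$. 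Isolating $\zeta_n$ and confirming these factorizations—rather than any conceptual step—is the crux, and it is here that the ordering ``first $z_{n+1}$, then $y_{n+1}$'' (so that $z_{n+1}$ may be substituted into \eqref{ynp1}) is what keeps the final expressions manageable.
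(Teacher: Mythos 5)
Your starting point coincides with the paper's: both arguments rest on the compatibility \eqref{eq5:compxn}, recast (as in \eqref{2reps}) as the $q$-Schlesinger conjugation $A_{n+1}^*(x) = M_n^*(qx)A_n^*(x)\left(M_n^*(x)\right)^{-1}$ with $\det M_n^* = 1/q$, and your structural checks are correct and worthwhile — the triangularity swap at $x=y_n$, the identification $y_n = -\tilde{\delta}/\tilde{\gamma}$ consistent with \eqref{eq4:yn-1}, and the observation that the gauge variable $w_{n+1}$ drops out of $y_{n+1}$ and $z_{n+1}$. Where you diverge is in the elimination scheme. The paper never computes the conjugated matrix entrywise and never locates the root of its $(1,2)$ entry. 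Instead it forms two \emph{hybrid} representations of $A_{n+1}^*$ — one whose top row carries the index-$(n+1)$ parameterization \eqref{eq5:AS} and whose bottom row comes from the right-hand side of \eqref{2reps}, and one mixing columns analogously — and imposes that their determinants vanish at the four roots $x=a_1t,\,a_2t,\,a_3,\,a_4$ of \eqref{eq5:detAnstar}. Since in each hybrid one row (or column) is divisible by a factor of type $(y_{n+1}-a_it)$ and the other by a factor of type $(y_n-a_it)$, this yields the four relations \eqref{T1}--\eqref{T4}, each a \emph{product} of a factor depending on the old variables and a factor depending on the new ones; the coefficient identity for $\tilde{\alpha}+y_{n+1}$, read off from the $x$-coefficient of the upper-right entry of \eqref{eq5:compxn}, reduces all four to degree one in $y_{n+1}$, $z_{n+1}$, $w_{n+1}/w_n$. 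Pairwise compatibility of \eqref{T1} and \eqref{T2} gives the intermediate relation \eqref{ynp1p}; comparison with \eqref{T3} then delivers \eqref{znp1}, and back-substitution gives \eqref{ynp1}. The factored shapes in the theorem, and $\zeta_n$ itself, are \emph{generated} by this bilinear structure.

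This is where your proposal has a genuine gap. Extracting $y_{n+1}$ as the root of the reduced $(1,2)$ entry yields a single unfactored rational function of $(y_n,z_n)$ — the cubic entry collapses to linear, but its two surviving coefficients are unstructured — and your plan explicitly defers the crux, namely ``recognizing'' that everything collapses to \eqref{znp1}--\eqref{ynp1} with the stated $\zeta_n$, without supplying any mechanism for that recognition. For a theorem whose entire content is the closed factored form, ending at ``the hard part will be to recognize $\zeta_n$'' does not prove the statement: $\zeta_n$ must be produced by the elimination (in the paper it emerges from comparing the linear-in-$(y_{n+1},z_{n+1})$ expressions for $w_{n+1}/w_n$), not matched post hoc. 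A secondary bookkeeping caution: under $n\to n+1$ one has $\kappa_1\to q\kappa_1$ and $\kappa_2\to \kappa_2/q$, so the $(1,1)$ entry of $A_{n+1}^*$ at $x=y_{n+1}$ is $q\kappa_1\tilde{z}_1$, and the shifted version of \eqref{eq3:zval1} reads $\tilde{z}_1 = (y_{n+1}-ta_1)(y_{n+1}-ta_2)/(q^2\kappa_1 z_{n+1})$; your two stated identities are each off by a factor of $q$, which happen to cancel in the combined extraction equation, but in the full computation these shifts must be tracked — they appear explicitly in the paper's hybrid matrices as the prefactors $q\kappa_1$ and $q^{-1}\kappa_2$.
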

\begin{proof}
Using \eqref{eq5:compxn}, we note that an alternate way of writing $A_{n+1}^*$ is given by
\begin{equation}\label{2reps}
A_{n+1}^*(x,t) = M_n^*(qx,t)A_n^*(x,t)\left(M_n^*(x,t)\right)^{-1}.
\end{equation}
Using \eqref{eq5:AS} to represent the top row, and the right hand side of \eqref{2reps} to express the bottom row, we have
\[
A_{n+1}^* = \begin{pmatrix} q\kappa_1((x-y_{n+1})(x-\tilde{\alpha}) + \tilde{z}_1) & q^{-1}\kappa_2 w_{n+1}(x-y_{n+1})\\
{\displaystyle -\frac{\left(\kappa _2-q \kappa _1\right){}^2 (x-y_n)}{q \kappa _2 w_n}} & \left(x-y_n\right) \left(b_n \left( \kappa _1-q^{-1}\kappa _2\right)- \kappa _1 \alpha +q^{-1}x \kappa _2\right)+ z_1 \kappa _1
\end{pmatrix},
\]
where $\tilde{z}_1$ and $\tilde{\alpha}$ denotes $z_1$ and $\alpha$ at $n + 1$. The determinant of $A_{n+1}^*$ at $x = a_1t$ is zero. However, using this representation of $A_{n+1}^*$, the top row is divisible by $(y_{n+1} - a_1t)$ and the bottom row is divisible by $(y_n - a_1t)$. This also applies to the case for $x= a_2t$, hence by equating the determinant of $A_{n+1}^*$ with zero gives two expressions for $w_{n+1}$
%\begin{subequations}
\begin{align}
\label{T1}-\left(\frac{y_{n+1}-a_2 t}{z_{n+1}}-q^2 \kappa _1 \left(a_1 t-\tilde{\alpha}\right)\right)\left( \frac{y_n-a_2 t}{ z_n}-\left(q \kappa _1-\kappa _2\right) b_n+q \kappa _1 \alpha -a_1 t \kappa _2\right) = \frac{\left(\kappa _2-q \kappa _1\right){}^2 w_{n+1}}{w_n},\\
\label{T2}-\left(\frac{y_{n+1}-a_1 t}{z_{n+1}}-q^2 \kappa _1 \left(a_2 t-\tilde{\alpha}\right)\right)\left( \frac{y_n-a_1 t}{ z_n}-\left(q \kappa _1-\kappa _2\right) b_n+q \kappa _1 \alpha-a_2 t \kappa _2\right) = \frac{\left(\kappa _2-q \kappa _1\right){}^2 w_{n+1}}{w_n}.
\end{align}
In a similar manner, we consider the matrix representation of $A_{n+1}^*$ given by
\[
A_{n+1}^* = \begin{pmatrix} (x-y_n) \left(\kappa _1 (q x-b_n)+q^{-1}\kappa _2 (b_n-q \beta)\right)+\kappa _2 z_2 & q^{-1}\kappa_2 w_{n+1}(x-y_{n+1})\\
{\displaystyle -\frac{\left(\kappa _2-q \kappa _1\right){}^2 (x-y_n)}{q \kappa _2 w_n}} & {\displaystyle \frac{\kappa_2}{q}((x-y_{n+1})(x-\tilde{\beta}) + \tilde{z}_2)}
\end{pmatrix},
\]
which has been obtained by using the left hand side of \eqref{2reps} to represent the left column of $A_{n+1}^*$ and the right hand side of \eqref{2reps} to represent the right column of $A_{n+1}^*$. The left and right columns are divisible by $y_n- a_3$ and $y_{n+1}-a_3$ respectively at $x= a_3$. This applies also in the case of $x= a_4$. Hence equating the determinant of this representation of $A_{n+1}^*$ at $x=a_3$ and $x=a_4$ with zero gives two additional equations for $w_{n+1}$
\begin{align}
\label{T3}-\left( z_n(a_4 - y_n) + \frac{b_n}{q}\left(\frac{1}{q\kappa_1}- \frac{1}{\kappa_2}\right)- \frac{\beta}{q\kappa_1} + \frac{a_3}{\kappa_2} \right)\left(z_{n+1} \left(a_4-y_{n+1}\right)+\frac{ a_3-\tilde{\beta}}{q^2\kappa_1}\right) = \frac{1}{q^2}\left(\frac{1}{q\kappa_1} - \frac{1}{\kappa_2}\right)^2 \frac{w_{n+1}}{w_n},\\
\label{T4}-\left(z_n(a_3 - y_n) + \frac{b_n}{q}\left(\frac{1}{q\kappa_1}- \frac{1}{\kappa_2}\right)- \frac{\beta}{q\kappa_1} + \frac{a_4}{\kappa_2}  \right) \left(z_{n+1} \left(a_3-y_{n+1}\right)+\frac{ a_4-\tilde{\beta}}{q^2\kappa_1}\right) = \frac{1}{q^2}\left(\frac{1}{q\kappa_1} - \frac{1}{\kappa_2}\right)^2 \frac{w_{n+1}}{w_n}.
\end{align}
%\end{subequations}
Equating the coefficient of $x$ in the upper right entry of \eqref{eq5:compxn} with zero reveals 
\[
\tilde{\alpha} + y_{n+1} = \frac{\left(q \kappa _1-\kappa _2\right) b_n+q \left(q \kappa _1 y_n+\kappa _2 \beta\right)}{q^2 \kappa _1},
\]
which reduces (\ref{T1}-\ref{T4}) to expressions for $w_{n+1}$ that are all of degree one in $y_{n+1}$ and $z_{n+1}$.
The compatibility between \eqref{T1} and \eqref{T2} is equivalent to
\begin{align}
\label{ynp1p} y_{n+1}z_n \left(\kappa _2-q^2 \kappa _1\right)\left(q^2 \kappa _1 z_{n+1}-1\right) =& \frac{a_3 a_4 \kappa _1 \kappa _2 \left(\kappa _2 z_n-q^2 \kappa _1 z_{n+1}\right) \left(a_1 a_2 t-q \theta _1 z_n\right) \left(a_1 a_2 t-q \theta _2 z_n\right)}{\theta _1 \theta _2 y_n
   \left(\kappa _2 z_n-1\right)} \\
  & -  q^2 \kappa _1 \left(z_n-z_{n+1}\right) \left(\left(a_3+a_4\right)
   \kappa _2 z_n-\left(a_1+a_2\right) t\right)\nonumber \\ & +q^2 y_n \kappa _1 \left(\kappa _2 z_n-1\right)
   \left(z_n-z_{n+1}\right).\nonumber
\end{align}
as is that of \eqref{T3} and \eqref{T4}. Substituting \eqref{ynp1p} into the equation resulting from the comparison of \eqref{T1} and \eqref{T3} the yields \eqref{znp1}. To obtain \eqref{ynp1}, we substitute the expressions for $z_{n+1}$, given by \eqref{znp1}, into the right hand side of \eqref{ynp1p}.
\end{proof}

As a preliminary check of the recurrence relations, we may consider the special case in which the support is chosen to be between $0$ and $qa_1t$. In this special case the moments are 
\[
\mu_k = \frac{(qa_1t)^{\sigma+k+1} (1-q)\left( \frac{a_1q^{\sigma+k+2}}{a_2},q;q\right)_\infty}{\left(q^{\sigma+k+1},\frac{qa_1}{a_2};q\right)_{\infty}} {}_2\phi_1 \left(\begin{array}{c |} \frac{a_4}{a_3} , q^{\sigma+k+1} \\ \frac{a_1q^{\sigma+k+2}}{a_2} \end{array} \hspace{.1cm} q; \frac{qa_1t}{a_4} \right).
\]
We explicitly compute the eigenvalues of $A_0^*$ to be
\[
\theta_1 = q^\sigma a_1a_2a_3a_4, \,\,  \theta_2 = a_1 a_2 a_3 a_4.
\]

Substituting these values of $\mu_k$ into \eqref{eq5:detyn}, \eqref{eq5:detzn}, \eqref{eq2:Delta} and \eqref{eq2:Sigma} for the $n=0$ case gives us the seed solution
\begin{align*}
y_0 &=  \frac{a_2 a_4 \left(a_1 t+a_3\right)-a_1 a_3 \left(a_2 t+a_4\right) q^{\sigma +1}}{a_2 a_4-a_1 a_3 q^{\sigma +1}}-\frac{a_1 a_2 t \left(q^{\sigma +1}-1\right)
   \left(a_1 a_3 q^{\sigma +2}-a_2 a_4\right) {}_2\phi_1 \left(\begin{array}{c |} \frac{a_4}{a_3} , q^{\sigma+2} \\ \frac{a_1q^{\sigma+3}}{a_2} \end{array} \hspace{.1cm} q; \frac{qa_1t}{a_4} \right) }{\left(a_1 q^{\sigma
   +2}-a_2\right) \left(a_1 a_3 q^{\sigma +1}-a_2 a_4\right){}_2\phi_1 \left(\begin{array}{c |} \frac{a_4}{a_3} , q^{\sigma+1} \\ \frac{a_1q^{\sigma+2}}{a_2} \end{array} \hspace{.1cm} q; \frac{qa_1t}{a_4} \right) },\\ 
   z_0 &= \frac{a_2 q^{-\sigma -1} {}_2\phi_1 \left(\begin{array}{c |} \frac{a_4}{a_3} , q^{\sigma+1} \\ \frac{a_1q^{\sigma+2}}{a_2} \end{array} \hspace{.1cm} q; \frac{a_1t}{a_4} \right) -a_1\,\,  {}_2\phi_1 \left(\begin{array}{c |} \frac{a_4}{a_3} , q^{\sigma+1} \\ \frac{a_1q^{\sigma+2}}{a_2} \end{array} \hspace{.1cm} q; \frac{qa_1t}{a_4} \right) }{a_1 a_2 a_3\,\,  {}_2\phi_1 \left(\begin{array}{c |} \frac{a_4}{a_3} , q^{\sigma+1} \\ \frac{a_1q^{\sigma+2}}{a_2} \end{array} \hspace{.1cm} q; \frac{a_1t}{a_4} \right) -a_1 a_2 a_4 \,\, {}_2\phi_1 \left(\begin{array}{c |} \frac{a_4}{a_3} , q^{\sigma+1} \\ \frac{a_1q^{\sigma+2}}{a_2} \end{array} \hspace{.1cm} q; \frac{qa_1t}{a_4} \right) }.
\end{align*}

As an illustration of the computation content of the recurrence relations and as a check on their veracity, we may compare numerical values of $y_{n+1}$ and $z_{n+1}$ using \eqref{eq5:detyn}, \eqref{eq5:detzn}, \eqref{eq2:Delta} and \eqref{eq2:Sigma} found by using \eqref{ynp1} and \eqref{znp1} from $y_n$ and $z_n$ for generic values of the parameters, $t$ and small values of $n$. Numerical evidence has been obtained to verify that $(y_1,z_1)$, found using \eqref{eq5:detyn}, \eqref{eq5:detzn}, \eqref{eq2:Delta} and \eqref{eq2:Sigma}, coincides with the values of $(y_1,z_1)$ found by using \eqref{ynp1} and \eqref{znp1} from the values of $(y_0,z_0)$ and \eqref{eq5:detyn}, \eqref{eq5:detzn}, \eqref{eq2:Delta} and \eqref{eq2:Sigma}. In a similar manner, we were also able to test the relationship between $(y_1,z_1)$ and $(y_2,z_2)$ using \eqref{ynp1} and \eqref{znp1} compared with values obtained by using \eqref{eq5:detyn}, \eqref{eq5:detzn}, \eqref{eq2:Delta} and \eqref{eq2:Sigma}.

We remark that the evolution $n \to n+1$ of the linear system corresponding to a deformed version of the Pastro weight supported on the unit circle, which is the circular analogue of the little $q$-Jacobi weight, has recently been obtained by Biane \cite{Biane:qP6}. The structure of the iterations in $n$ should have a similar structure to other translational components of the affine Weyl group, such as the translational component that coincides with the evolution of $q$-$\mathrm{P}_{\mathrm{VI}}$. This multiplicative structure is of \eqref{ynp1} and \eqref{znp1} is similar to the B\"acklund transformation of Biane \cite{Biane:qP6}. In the work of Biane \cite{Biane:qP6}, the B\"acklund transformation, representing the shift $n \to n+1$, simultaneously changes one of the eigenvalues of $A_{0,n}^*$ and $A_{2,n}^*$, whereas in our transformation, the eigenvalues of $A_{0,n}^*$ are independent of $n$. 
The little $q$-Jacobi case has also been studied in \cite{GK_2009}, although in a truncated way. It is clear from this work
that the authors have treated a specialized, in the sense that $ t $ is fixed by the parameters, and a degenerate case, whereby
the parameters are related by $ a_1a_4 = a_2a_3 $, and consequently have recovered elementary function
expressions for the three-term recurrence coefficients.   

\section*{Acknowledgments}

This research was supported in part by the Australian Research Council grant \#DP0881415.

\end{document}